\newcommand{\cyl}{\textsf{C}}
\newcommand{\midd}{\; \; \mbox{\Large{$\mid$}}\;\;}
\newcommand{\Era}{\rightarrowtail}
\newcommand{\Ela}{\leftarrowtail}
\newcommand{\NP}{\mathbf{NP}}
\newcommand{\shSat}{\sharp\Sat}
\newcommand{\CPL}{\textbf{\textsf{CPL}}}
\newcommand{\CPLc}{\textbf{\textsf{CPL}}_0}
\newcommand{\PL}{\textbf{\textsf{PL}}}
\newcommand{\QPL}{\textbf{\textsf{QPL}}}
\newcommand{\MQPA}{\textbf{\textsf{MQPA}}}
\newcommand{\BOX}{\mathbf{C}}
\newcommand{\DIA}{\mathbf{D}}
\newcommand{\BBOX}{\mathbb{C}}
\newcommand{\BDIA}{\mathbb{D}}
\newcommand{\twoOm}{\Bool^\Nat}
\newcommand{\Nat}{\mathbb{N}}
\newcommand{\fone}{F}
\newcommand{\ftwo}{G}
\newcommand{\fthree}{H}
\newcommand{\model}[1]{\llbracket #1\rrbracket}
\newcommand{\longv}[1]{}
\newcommand{\bone}{\textsl{b}}
\newcommand{\btwo}{\textsl{c}}
\newcommand{\bthree}{\textsl{d}}
\newcommand{\bfour}{\textsl{e}}
\newcommand{\bvar}{\textsl{x}}
\newcommand{\Sat}{\mathtt{SAT}}
\newcommand{\Taut}{\mathtt{TAUT}}
\newcommand{\CSat}{\mathtt{CSAT}}
\newcommand{\CTaut}{\mathtt{CTAUT}}
\newtheorem{prop}{Proposition}
\newtheorem{lemma}{Lemma}
\newtheorem{ex}{Example}
\newtheorem{cor}{Corollary}
\theoremstyle{definition}
\newtheorem{defn}{Definition}
\newtheorem{notation}{Notation}
\newcommand{\start}{\bullet}
\newcommand{\Bool}{\mathbb{B}}
\title{\textbf{Some Remarks on Counting Propositional Logic}\thanks{I wish to thank U. Dal Lago and P. Pistone for their constant guidance and help.
(Of course, I am the only responsible for any mistakes in it.)}}
\author{M. Antonelli}
\date{(part of a joint work with U. Dal Lago and  P. Pistone)}
\begin{document}
\maketitle

\begin{abstract}
Counting propositional logic was recently introduced
in relation to randomized computation
and shown able to logically characterize the full
counting hierarchy~\cite{ICTCS}.
In this paper we aim to clarify the intuitive meaning
and expressive power of its univariate fragment.
On the one hand, we make the connection between
this logic and stochastic experiments
explicit, proving that the counting language 
can simulate \emph{any}
(and only)
event associated with dyadic distributions.
On the other,  we provide an effective
procedure to measure the probability of counting
formulas.
%Finally, we consider the relation between this system and standard quantified propositional logic.}
\end{abstract}

%%%% INTRODUCTION
% !TEX root = some_remarks.tex

\section{Introduction}
The need for reasoning about uncertain knowledge and 
probability has come out
in several areas of research,
from AI to economics, from linguistics to theoretical computer science
(TCS, for short).
For example, probabilistic models are crucial when considering
randomized programs and algorithms or 
dealing with partial information, e.g.~in expert systems.
It was this concrete demand that led to the first attempts 
to analyze probabilistic reasoning \emph{formally}, and
to the development of a few logical systems,
starting in 1986 with Nilsson's pioneering (modal) proposal:
%
%in the framework of modal logic, 
%\footnotesize

\small
\begin{quote}
Because many artificial intelligence applications require
 the ability to reason with uncertain knowledge,
 it is important to seek appropriate generalizations of logic
 from this case. \cite[p. 71]{Nilsson86}
\end{quote}
\normalsize
For probabilistic algorithms behavioral properties,
like termination or equivalence, 
have \emph{quantitative} nature, that is
computation terminates \emph{with a certain probability}, or
programs simulate the desired function \emph{up to some probability} of error, for instance with learning algorithm.
How can such properties be studied within a logical system?
In a series of recent works~\cite{CiE,ICTCS,ADLP22}, we
introduce logics with counting and measure quantifiers,
providing a new formal framework to study probability,
and show them strongly related to several  aspects of randomized computation.
Specifically, counting formulas can be seen as expressing
that a program behaves in a certain way with a given probability.
%
%\textcolor{red}{Idea concreta: a cosa serve $\CPL$, Le formule con counting quantifiers possono vedersi come specifiche probabilistiche: "il programma fa A con una certa probabilit\'a" e quindi si prestano alla verifica di programmi (sia verifica nel senso Curry-Howard, del tipaggio, sia verifica come previsione del comportamento di un sistema che si comporta seguendo una legge probabilistiche: 
%
%
In this short paper, we aim to clarify what is 
the expressive power 
(and limit) of the simple, univariate fragment of counting propositional logic~\cite{ICTCS},
to better understand its connection 
with both randomized computation
and other probability systems.

\subsection{On Logic and Randomized Computation}

The development of counting logics is part of an overall study aiming at analyzing the interaction between (quantitative) logic and probabilistic computation, so to deepen our knowledge of both.
This project was motivated by two main considerations. 
On the one hand, since their %formal 
appearance
in the 1970s, probabilistic computational models 
have become more and more pervasive in several fast-growing areas
of computer science and technology, such as statistical learning or approximate
computing.
On the other, the development of (deterministic) 
models has considerably benefitted from
interchanges between logic and TCS.
Nevertheless, there is at least one crucial
aspect of the theory of computation
which was only marginally touched by such fruitful
interactions, namely \emph{randomized}
computation.
The global purpose of our study is to lay the
foundation for a new approach to bridge this gap,
and its
key ingredient consists
in considering new \emph{inherently quantitative} logics, 
extended with non-standard quantifiers able to
``measure'' the probability of their argument formula.

So far, we have mostly focussed on a few specific aspects
of the interaction between quantitative logics and randomized computation: 
\begin{itemize}
\itemsep0em

\item[$*$] Complexity theory: 
it is well-known that 
classical propositional logic ($\PL$, for short)
provides the first example of an $\NP$-complete problem~\cite{Cook71},
while its quantified version 
characterizes the full polynomial hierarchy~\cite{MeyerStockmeyer72,MeyerStockmeyer73,BB09}.
Yet, no analogous logical counterpart was found 
for the probabilistic and counting classes~\cite{Gill77,Valiant79} and hierarchy~\cite{Wagner}.
In~\cite{ICTCS}, we introduce a counting logic, called $\CPL$, which is shown to be the probabilistic counterpart of 
quantified propositional logic ($\QPL$, for short).
Indeed, its formulas in a special prenex normal form, characterizes the corresponding level of Wagner's hierarchy.

%
%We show that the validity of $\CPL$-formulas in (a special kind of) prenex normal form, characterizes the corresponing level of the \emph{countin  hierararchy}, a hierarchy of complexity classes related to counting problems, such as $\sharp\mathtt{SAT}$ and $\mathtt{MAJSAT}$, and tightlly linkedd to probabilistic classes, like $\mathbf{PP}$.

\item[$*$] Programming language theory: 
type systems for randomized $\lambda$-calculi, 
also guaranteeing various forms of termination properties, 
were introduced in the last decades, e.g. in~\cite{SahebDjaromi,JonesPlotkin,DeLiguoroPiperno,DLGH},
but these systems are not ``logically oriented'' and no
%DeLiguoroPiperno,JonesPlotkin,SahebDjaromi}.
%Yet these systems are not \emph{logically-oriented} and no
Curry-Howard correspondence~\cite{Curry42,Howard80,SorensenUrzyczyn} is known for them.
In~\cite{ADLP22}, we define an intuitionistic counting logic,
which captures quantitative behavioral properties
and typed $\lambda$-calculi in which types 
reveal the actual probability of termination.
In this way, we also provides a probabilistic version 
of the correspondence.
%to capture \emph{quantitative} behavioral properties %of randomized algorithms and to provide a probabilistic generalization of the correspondence.
% We show that it can be decorated with terms form a suitable probabilistic extension of the  $\lambda$-calculus.
%We describe a probabilistic extension of the usual Curry-Howard correspondence between intuitionistic logic and typed $\lambda$-calculi, in which counting quantifiers \emph{reveal} the actual probability of termination, and  proof normalization describes well-defined evaluation strategies of probabilistic functional programs. 

\item[$*$] Computation theory: 
arithmetics and deterministic computation are linked
by deep theorems from logic and recursion theory.
In~\cite{CiE}, we present a quantitative extension for 
the language of Peano Arithmetic,
able to formalize basic results from 
probability theory which are not expressible in standard arithmetic.
We also generalize classical theorems from recursion theorem to the quantitative realm,
for instance due to our randomized version of G\"odel's arithmetization~\cite{Godel31}.
\end{itemize}

%[Curry-Howard Correspondence~\cite{SorensenUrzyczyn}, which shows a fundamental correspondence between type systems for abstract functional languages and  proof systems for several constructive logics.]

%[Indeed, in spite of the appearance in the literature of several and well-studied abstract models for probabilistic computation, e.g.~probabilistic automata, Turing machines, and $\lambda$-calculi,  these seem not to have yet found a precise logical counterpart.
%
%When considering probabilistic algorithms, behavioral properties like termination or equivalence have an inherently \emph{quantitative} nature: any computation terminates \emph{with a given probability}, and a program might simulate a desired function \emph{up} to some probability of error, e.g. think of probabilistic primality test or learning algorithm. Can such properties be studied within a logical system?]

%%%%%% THE STRUCTURE OF THE PAPER
\subsection{The Structure of the Paper}
We try to clarify the intuitive meaning associated with
our counting logic and the nature of its non-standard 
quantifiers, focussing on a few specific topics.
In particular, the presentation is structured as follows.
First, in Section~\ref{sec:CPL}
we briefly recap some crucial aspects of
measure theory and the semantics of $\CPLc$.
Then, in Section~\ref{sec1} we consider the relation
between this logic and stochastic experiments explicit.
Specifically, we prove that $\CPLc$ can simulate
\emph{any} event associated with dyadic distributions.
Finally, in Section~\ref{sec:measuring} we provide an effective
procedure to measure the probability of counting formulas.
%
%Finally, in Section~\ref{sec:QPL}, we deal with the relationship between $\CPLc$ and $\QPL$, showing how quantified propositional formulas can be expressed as counting ones. 

%\newpage
% !TEX root = some_remarks.tex
\newcommand{\head}{\textsc{head}}
\newcommand{\tail}{\textsc{tail}}

\newcommand{\zero}{\mathbf{0}}
\newcommand{\one}{\mathbf{1}}

\section{On (Univariate) Counting Propositional Logic}\label{sec:CPL}

In order to avoid clash in terminology,
we start by briefly recapping a few notions from basic
probability theory.
Then, we summarise the crucial aspects 
of our non-standard, univariate counting  propositional logic
as first introduced in~\cite{ICTCS}.

\subsection{Preliminaries}

\paragraph{Probability Space.}
In probability theory, an \emph{outcome} or \emph{point} 
is the result of a single
execution of an experiment,
the \emph{sample space} $\Omega$ 
is the set of all possible outcomes, and
an event is then a subset of $\Omega$.
%
%
%\begin{defn}[Disjoint and Independent Events]
Two events, say $E_1$ and $E_2$,
are \emph{disjoint} or \emph{mutually exclusive},
when they cannot happen at the same time,
that is $E_1\cap E_2=\emptyset$.
Two events are \emph{(stochastically) independent}
when the occurrence of one does not affect the probability
for the other to occur.
%\end{defn}
%
A class $\mathscr{F}$ of subsets of $\Omega$ is a ($\sigma$-)\emph{field}
if it contains $\Omega$ itself and
is closed under the formation of complements
and (in-)finite unions.
The largest $\sigma$-field in $\Omega$
is the power class $2^\Omega$ consisting
of \emph{all} the subsets of $\Omega$.
Given a $\sigma$-field $\mathscr{F}$, we call
the \emph{$\sigma$-field generated by $\mathscr{F}$},
denoted $\sigma(\mathscr{F})$, the smallest 
$\sigma$-algebra containing $\mathscr{F}$.
A probability measure $\textsc{Prob}(\cdot)$, 
is a real-valued function
defined on a field $\mathscr{F}$ and
satisfying Kolmogorov's axioms, i.e.~associating 
each event $E$ in the field with
a number $\textsc{Prob}(E)$ so that:
(i)  for each $E\in \mathscr{F}$,
$0\leq \textsc{Prob}(E) \leq 1$,
(ii) $\textsc{Prob}(\emptyset)=0$ and
$\textsc{Prob}(\twoOm)=1$,
(iii) if $E_1,E_2,\dots \in\mathscr{F}$ is a sequence of
disjoint events,
then $\textsc{Prob}\big(\bigcup^\infty_{k=1} E_k\big) =
\sum^{\infty}_{k=1} \textsc{Prob}(E_k)$.
Given two disjoint events, $E_1$ and $E_2$,
$\textsc{Prob}(E_1\cup E_2)= \textsc{Prob}(E_1)
+ \textsc{Prob}(E_2)$,
while for two independent events, $E_1'$
and $E_2'$, $\textsc{Prob}(E_1'\cap E_2') =
\textsc{Prob}(E_1') \cdot \textsc{Prob}(E_2')$.

\paragraph{Cylinder Measure.}
In the following,
we will deal with a specific probability space,
as defined in~\cite{Billingsley}, where
Bilingsley considers a model to simultaneously
fit random drawing of points from a segment
and infinte sequence of coin tosses
(so to be interesting for both geometry and probability).
In particular, when tossing a coin, the
set of the possible outcomes of the 
experiment is $S=\{\tail,\head\}$.
More in general, when dealing with a Bernoulli
experiment, we can consider the set of its
possible outcomes as simply $\Bool=\{\zero,\one\}$
(or even $2=\{0,1\}$).\footnote{In what follows, 
we use all these three notations
basing on the pertinence with the context.
In particular, we use $\tail$ and $\head$ when dealing with
concrete examples 
concerning tossing. 
Of course, all these sets
are equivalent for our goal.}
The corresponding sample space is $\Omega=
\Bool^\Nat$, i.e.~the set of all infinite sequences of random bits
(that is coin tosses) denoted
as $\omega=\omega(1)\omega(2)\dots$,
where for any $\omega\in \Omega$
and $i\ge \Nat$, $\omega(i)\in\Bool$.
\footnote{Notice that we've slightly modified
Billinglsey's notation, where e.g.~the 
infinite-dimensional Cartesian product $\Omega$ as $S^\infty$
and $\omega$ as $z_1(\omega)$.}
Each sequence $\omega$ can be interpreted as the
result of infinitely flipping a coin.

\begin{defn}[Cylinder of Rank $n$]\label{df:cylRank}
A \emph{cylinder of rank n} is a set of
the form $\cyl_H=\{\omega \ | \ \omega(1), \dots, \omega(n) \in H\}$,
with $H\subset \Bool^n$. 
\end{defn}
\noindent
When $H$ is a singleton,
an event $E=\{ \omega \ | \ \omega(1), \dots, \omega(n)
=(u_1,\dots, u_n)\}$,
such that the first $n$ repetitions of the experiment
give the outcomes $u_1,\dots, u_n$ in sequence
is called a \emph{thin cylinder}.

The class of cylinders of all ranks is denoted by
$\mathscr{C}_0$, 
while $\mathscr{C}$ is a field closed under
complementation and union.
It is thus possible to define
a measure on it. 
In particular, a canonical one, consists
in assigning the following probability measure $\mu_{\mathscr{C}}$, to any cylinder of rank $n$.

\begin{defn}[Cylinder Measure]\label{df:measureCyl}
Given $u\in \Bool$, $p_u$ denote the (non-negative
and summing to 1) probabilities
of getting $u$.
For any cylinder $\cyl_H$
$$
\mu_\mathscr{C}(\cyl_H)=\sum_H p_{u_1}
\cdots p_{u_n}.
$$
\end{defn}
\noindent
In the special case of $\cyl_H$
being a thin cylinder 
$
\mu_{\mathscr{C}}\big(\{\omega \ | \ (\omega(1), \dots, \omega(n))
= (u_1,\dots, u_n)\}\big) = p_{u_1} \cdots p_{u_n},
$
providing a model for an infinite sequence of random
bits or independent tosses,
each with probability $p_{\zero_i}$ of success
and $p_{\one_i}$ of failure.
Observe also that when the coin is fair, for each tossing 
$p_{\zero}=p_{\one} = \frac{1}{2}$.
In this case, since cylinders of rank $n$
are finite sets,
the following result is a straightforward
consequence of Definition~\ref{df:cylRank}.

\begin{cor}\label{cor:CylDyad}
For any cylinder of rank $n$,
call it $\cyl_H$, and $p_\zero=p_\one=\frac{1}{2}$,
there are some $l,m\in \Nat$
such that 
$
\mu_\mathscr{C}(\cyl_H) = \frac{l}{2^m}.
$
\end{cor}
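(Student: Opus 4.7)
The plan is to unfold the two relevant definitions and observe that when $p_\zero=p_\one=\tfrac{1}{2}$, every summand in the cylinder measure collapses to the same dyadic value. First I would fix an arbitrary cylinder $\cyl_H$ of rank $n$, where $H\subseteq\Bool^n$. Since $H$ is a subset of a finite set of cardinality $2^n$, the cardinality $|H|$ is a well-defined natural number bounded above by $2^n$.

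Next I would apply Definition~\ref{df:measureCyl} to write
\[
\mu_\mathscr{C}(\cyl_H)=\sum_{(u_1,\dots,u_n)\in H} p_{u_1}\cdots p_{u_n}.
\]
Under the hypothesis $p_\zero=p_\one=\tfrac{1}{2}$, each factor $p_{u_i}$ equals $\tfrac{1}{2}$ regardless of whether $u_i=\zero$ or $u_i=\one$, so every product $p_{u_1}\cdots p_{u_n}$ equals $\tfrac{1}{2^n}$. Therefore the sum reduces to $|H|\cdot\tfrac{1}{2^n}=\tfrac{|H|}{2^n}$.

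Setting $l:=|H|\in\Nat$ and $m:=n\in\Nat$ yields the desired dyadic representation $\mu_\mathscr{C}(\cyl_H)=\tfrac{l}{2^m}$, concluding the argument. There is no genuine obstacle here: the result is a direct computation from the definitions, and the only mild point worth flagging is that the sum is finite (hence no convergence issue arises) precisely because $H\subseteq\Bool^n$ is finite, which is why the corollary is stated as a consequence of Definition~\ref{df:cylRank} rather than requiring the full countable additivity from Kolmogorov's axioms.
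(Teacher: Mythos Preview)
Your proof is correct and is precisely the straightforward unfolding the paper has in mind: the paper gives no explicit proof, merely remarking that since $H\subseteq\Bool^n$ is finite the result follows from Definition~\ref{df:cylRank} (together with Definition~\ref{df:measureCyl}), and your computation $\mu_\mathscr{C}(\cyl_H)=|H|/2^n$ is exactly that.
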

\noindent
Finally, going back to $\sigma(\mathscr{C})$,
a well-defined probability measure can be assigned
to it by simply generalizing Definition~\ref{df:measureCyl} in the natural way.
Then, the probability space 
$(\Bool^\Nat, \sigma(\mathscr{C}), \mu_{\mathscr{C}})$,
%where $\sigma(\mathscr{C})$ is the $\sigma$-field  generated by $\mathscr{C}$
where $\mu_{\mathscr{C}}$ is %the standard probability measure
such that, $p_{\zero}=p_{\one}=\frac{1}{2}$,
defines a standard model for infinite
and independent tosses of a \emph{fair} coin.

%%%% LONGVERSION
\longv{
\begin{defn}[$n$-Cylinder]
The subset of $\Bool^\Nat$ in the form
$$
\textsf{C}_X  = \{s \cdot \omega \ | \ s\in X  \ 
\& \ \omega \in \Bool^\Nat\},
$$
where $X\subseteq \Nat$ and $\cdot$
denotes sequence concatenation,
are called \emph{n-cylinders}
\end{defn}

Let $\mathscr{C}_n$ denote the set of all
$n$-cylinders and $\mathscr{C}$ the corresponding
algebra.
$\sigma(\mathscr{C})$ is the smallest 
$\sigma$-algebra including $\mathscr{C}$
and which is Borel.
Let $\mu_{\mathscr{C}}$ indicate the natural
probability measure on $\mathscr{C}$
which assigns to $\textsc{C}_X$
the measure $\frac{|X|}{2^n}$.
It can be extended to $\sigma(\mathscr{C})$.

Let $\mathscr{C}_0$ be the class of cylinders of all ranks.
$\mathscr{C}_0$ is a field and, in particular,
is closed under the formation of finite unions.
Let $B$ be a cylinder of rank $m$,
$$
B = \{\omega : (z_1(\omega), \dots, z_m(\omega)) \in I\}.
$$
Suppose $n\leq m$.
Let $H'$ consists of the sequences
$(u_1,\dots, u_m)\in \Bool^m$ for which the truncated
sequence $(u_1,\dots, u_n)$ lies in $H$.
Then, $A$ has the alternative form,
$$
A = \{\omega : (z_1(\omega), \dots, z_m(\omega))
\in H'\}.
$$
So,  $A\cup B$ is the cylinder,
$$
A \cup B = \{\omega : (z_1(\omega), \dots, z_m(\omega))
\in H'\cup I\}.
$$
We define a set function $P$ on
$\mathscr{C}_0$, which turns out to be a probability
measure.
For the cylinder $A$ above,
$$
P(A) = \sum_H p_{u_1}\cdots p_{u_n},
$$
the sum extending over all the sequences
$(u_1,\dots, u_n)$ in $H$.
As a special case,
$P\big[(\{\omega : (z_1(\omega), \dots, z_n(\omega))=
(u_1,\dots, u_n)\}\big] = p_{u_1} \cdots p_{u_n}$,
where $P$ is a \emph{product measure}.
It provides a model for an infinite sequence of
independent repetitions of
the tossing experiment.
In particular, when $p_0=p_1=\frac{1}{2}$,
with $\Bool$ it is a model for independent
tosses of a fair coin.

As seen, $A$ can be represented in different ways:
if $n=m$, then $H=H'$;
otherwise, let $n<m$, then $H'$ consists of
those $(u_1,\dots, u_m)$ in $\Bool^mm$ for which
$(u_1,\dots, u_m)$ lies in $H$,
\begin{align*}
\sum_{H'} p_{u_1}\cdots p_{u_n} p_{u_{n+1}}
\cdots p_{u_m}
&=
\sum_H p_{u_1} \cdots p_{u_n} 
\sum_{S^{m-n}} p_{u_{n+1}} \cdots p_{u_m} \\
&= \sum_H p_{u_1} \cdots p_{u_n}.
\end{align*} 
}

%%%% SECTION
%%%% COUNTING PROPOSITIONAL LOGIC IN A NUTSHELL
\subsection{(Univariate) Counting Propositional Logic in a Nutshell}

\paragraph{Grammar and Semantics.}
In standard $\PL$ formulas are
interpreted as single truth-values.
The core idea of our counting semantics
consists in modifying this intuition in a quantitative sense,
associating formulas with \emph{measurable} sets of
(satisfying) valuations.
Given a counting formula $\fone$,
its interpretation is the set $\model{\fone}\subseteq \twoOm$,
made of all maps $f\in\twoOm$
 ``making $\fone$ true''.
Such sets belong to the standard Borel algebra over $\twoOm$,
$\mathscr{B}(\twoOm)$,
yielding a genuinely quantitative semantics.
Specifically, atomic propositions correspond to 
\emph{cylinder sets}~\cite{Billingsley} of the form:
$$
Cyl(i) = \{f \in \twoOm \ | \ f(i) = \one\},
$$
with $i\in\Nat$.
 Molecular expressions
 are interpreted in a natural way, via
 standard operations of complementation, 
 finite intersection and union over the $\sigma$-algebra.
So, formulas are all measurable and, 
in particular, associated with the
unique cylinder measure  $\mu_{\mathscr{C}}$,
where
$\mu_{\mathscr{C}}(Cyl(i))= \frac{1}{2}$
for any $i\in\Nat$~\cite{Billingsley}.

We enrich this language
with new formulas expressing the measure of such sets.
By adapting the notion of counting operator by 
Wagner~\cite{Wagner},
we introduce two non-standard quantifiers $\BOX^q$ and 
$\DIA^q$, with $q\in \mathbb{Q}_{[0,1]}$.
Then, quantified formulas $\BOX^q \fone$ and $\DIA^q\fone$
express that $\fone$ is satisfied in a certain portion of all its possible
interpretations
to be (resp.) greater or strictly smaller than the given one.
For example, the formula $\BOX^{1/2}\fone$
says that $\fone$ is satisfied by \emph{at least}
half of its valuations.
Semantically, this amounts at checking that
$\mu_{\mathscr{C}}(\model{\fone}) \ge \frac{1}{2}$.

\begin{defn}\label{def:semantics}
\emph{Formulas of $\CPLc$} are defined by the grammar below:
$$
\fone := \mathbf{i} \midd \neg \fone \midd \fone \wedge \fone
\midd \fone \vee \fone \midd \BOX^q \fone \midd \DIA^q\fone,
$$
where $i\in \Nat$ and $q\in \mathbb{Q}_{[0,1]}$.
Given the standard cylinder space $\mathscr{P}
=(\twoOm, \sigma(\mathscr{C}), \mu_{\mathscr{C}})$,
for each formula of $\CPLc$, $\fone$, its \emph{interpretation}
is the measurable set $\model{\fone} \in \mathscr{B}(\twoOm)$
defined as follows:

\begin{minipage}{\linewidth}
\begin{minipage}[t]{0.4\linewidth}
\begin{align*}
\model{\mathbf{i}} &= Cyl(i) \\
\model{\neg \ftwo} &= \twoOm - \model{\ftwo} \\
\model{\ftwo_1 \wedge \ftwo_2} &=
\model{\ftwo_1} \cap \model{\ftwo_2} \\
\model{\ftwo_1\vee \ftwo_2} &=
\model{\ftwo_1} \cup \model{\ftwo_2} 
\end{align*}
\end{minipage}
\hfill
\begin{minipage}[t]{0.55\linewidth}
\begin{align*}
\model{\BOX^q\ftwo} &= \begin{cases}
\twoOm \ &\text{if } \mu_{\mathscr{C}}(\model{\ftwo})
\ge q \\
\emptyset \ &\text{otherwise}
\end{cases} \\
\model{\DIA^q \ftwo} &=
\begin{cases}
\twoOm \ &\text{if } \mu_{\mathscr{C}}(\model{\ftwo}) < q\\
\emptyset \ &\text{otherwise.}
\end{cases}
\end{align*}
\end{minipage}
\end{minipage}
\end{defn}
\noindent
A formula of $\CPLc$ $\fone$, is said \emph{valid}
when $\model{\fone}=\twoOm$
while is said \emph{invalid} when $\model{\fone}=\emptyset$.

\paragraph{Proof Theory.}
In~\cite{ICTCS}, 
we even define a sequent calculus, 
called $\mathbf{LK}_{\CPLc}$, 
which is proved sound and complete for the semantics above.
Its language is labelled, that is its formulas contains 
both a counting and a Boolean part.

\begin{defn}[Boolean Formula]
\emph{Boolean formulas} are defined by the
grammar below:
$$
\bone := \bvar_i \midd \top \midd \bot
\midd \neg \bone \midd \bone \wedge \bone
\midd \bone \vee \bone,
$$
where $i\in\Nat$.
The \emph{interpretation of a Boolean
formula $\bone$}, $\model{\bone}
\in \mathscr{B}(\twoOm)$
is inductively defined as follows:

\begin{minipage}{\linewidth}
\begin{minipage}[t]{0.4\linewidth}
\begin{align*}
\model{\bvar_i} &= Cyl(i) \\
\model{\top} &= \twoOm \\
\model{\bot} &= \emptyset
\end{align*}
\end{minipage}
\hfill
\begin{minipage}[t]{0.5\linewidth}
\begin{align*}
\model{\neg\bone} &= 
\twoOm - \model{\bone}  \\
\model{\bone \wedge \btwo} &=
\model{\bone} \cap \model{\btwo} \\
\model{\bone \vee \btwo} &=
\model{\bone} \cup \model{\btwo}.
\end{align*}
\end{minipage}
\end{minipage}
\end{defn}
\noindent
%One can express semantic properties of Boolean formulas, such as $\mu(\model{\bone)} \triangleright q$, where $\triangleright \in \{\ge, >, \leq, <, =\}$, $\bone$ is a Boolean formula,  and $q\in\mathbb{Q}_{[0,1]}$.
%
%
So, sequents of $\mathbf{LK}_{\CPLc}$
are made of \emph{labelled expressions}
of the form $\bone \Era \fone$ and
$\bone \Ela \fone$,
where $\bone$ and $\fone$ are (resp.)
a Boolean and a counting formula.
Intuitively, a labelled formula
$\bone \Era \fone$ (resp. $\bone \Ela \fone$)
is true when the set of valuations
satisfying $\bone$ is included
in
(resp. includes) the interpretation of $\fone$.
We use $\bone \vDash \btwo$ for
$\model{\bone} \subseteq \model{\btwo}$.
Then, a \emph{labelled sequent} is
a sequent of the form $\vdash L$,
where $L$ is a labelled formula.
The proof system for $\CPLc$
is defined by the rules illustrated
in Figure~\ref{fig:CPLcproof}.
Observe that some of the rules include
so-called \emph{exteral hypotheses},
i.e.~formulas such as $\bone\vDash \btwo$
or $\mu(\model{\bone})
\triangleright q$, where $\triangleright \in
\{\ge,>, \leq,<,=\}$, $\bone$ and $\btwo$ are
Boolean formulas and 
$q\in\mathbb{Q}_{[0.1]}$.
These hypotheses express semantic
properties of Boolean formulas
or conditions to be checked inside 
$\mathscr{B}(\twoOm)$.

\begin{figure}[h!]
\begin{center}
\framebox{
\parbox[t][12.4cm]{11cm}{
\begin{center}
Initial Sequents

\small
\begin{minipage}{\linewidth}
\begin{minipage}[t]{0.5\linewidth}
\begin{prooftree}
\AxiomC{$\bone \vDash \bvar_n$}
\RightLabel{$Ax1$}
\UnaryInfC{$\vdash \bone \Era \mathbf{n}$}
\end{prooftree}
\end{minipage}
\hfill
\begin{minipage}[t]{0.5\linewidth}
\begin{prooftree}
\AxiomC{$\bvar_n \vDash \bone$}
\RightLabel{$Ax2$}
\UnaryInfC{$\vdash \bone \Ela \mathbf{n}$}
\end{prooftree}
\end{minipage}
\end{minipage}
\end{center}

\normalsize
\begin{center}
Set Rules
\end{center}

\small
\begin{prooftree}
\AxiomC{$\vdash \bthree \Era \fone$}
\AxiomC{$\vdash \bfour \Era \fone$}
\AxiomC{$\bone \vDash \btwo \vee \bthree$}
\RightLabel{$R^\Era_\cup$}
\TrinaryInfC{$\vdash \bone \Era \fone$}
\end{prooftree}

\begin{prooftree}
\AxiomC{$\vdash \btwo \Ela \fone$}
\AxiomC{$\vdash \bthree \Ela \fone$}
\AxiomC{$\btwo \wedge \bthree\vDash \bone$}
\RightLabel{$R^\Ela_\cap$}
\TrinaryInfC{$\vdash \bone\Ela \fone$}
\end{prooftree}

\normalsize
\begin{center}
Logical Rules
\end{center}

\small
\begin{minipage}{\linewidth}
\begin{minipage}[t]{0.45\linewidth}
\begin{prooftree}
\AxiomC{$\vdash \btwo \Ela \fone$}
\AxiomC{$\bone \vDash \neg \btwo$}
\RightLabel{$R^\Era_\neg$}
\BinaryInfC{$\vdash \bone \Era \neg \fone$}
\end{prooftree}
\end{minipage}
\hfill
\begin{minipage}[t]{0.5\linewidth}
\begin{prooftree}
\AxiomC{$\vdash \btwo \Era \fone$}
\AxiomC{$\neg \btwo \vDash \bone$}
\RightLabel{$R^\Ela_\neg$}
\BinaryInfC{$\vdash \bone \Ela \neg \fone$}
\end{prooftree}
\end{minipage}
\end{minipage}

\begin{minipage}{\linewidth}
\begin{minipage}[t]{0.45\linewidth}
\begin{prooftree}
\AxiomC{$\vdash \bone \Era \fone$}
\RightLabel{$R1^\Era_\vee$}
\UnaryInfC{$\vdash \bone \Era \fone \vee \ftwo$}
\end{prooftree}
\end{minipage}
\hfill
\begin{minipage}[t]{0.5\linewidth}
\begin{prooftree}
\AxiomC{$\vdash \bone \Era \ftwo$}
\RightLabel{$R2^\Era_\vee$}
\UnaryInfC{$\vdash \bone \Era \fone \vee \ftwo$}
\end{prooftree}
\end{minipage}
\end{minipage}

\begin{minipage}{\linewidth}
\begin{minipage}[t]{0.45\linewidth}
\begin{prooftree}
\AxiomC{$\vdash \bone \Ela \fone$}
\AxiomC{$\vdash \bone \Ela \ftwo$}
\RightLabel{$R^\Ela_\vee$}
\BinaryInfC{$\vdash \bone \Ela \fone \vee \ftwo$}
\end{prooftree}
\end{minipage}
\hfill
\begin{minipage}[t]{0.5\linewidth}
\begin{prooftree}
\AxiomC{$\vdash \bone\Era \fone$}
\AxiomC{$\vdash \bone \Era \ftwo$}
\RightLabel{$R^\Era_\wedge$}
\BinaryInfC{$\vdash \bone \Era \fone \wedge \ftwo$}
\end{prooftree}
\end{minipage}
\end{minipage}

\begin{minipage}{\linewidth}
\begin{minipage}[t]{0.45\linewidth}
\begin{prooftree}
\AxiomC{$\vdash \bone \Ela \fone$}
\RightLabel{$R1^\Ela_\wedge$}
\UnaryInfC{$\vdash \bone \Ela \fone \wedge \ftwo$}
\end{prooftree}
\end{minipage}
\hfill
\begin{minipage}[t]{0.5\linewidth}
\begin{prooftree}
\AxiomC{$\vdash \bone \Ela \ftwo$}
\RightLabel{$R2^\Ela_\wedge$}
\UnaryInfC{$\vdash \bone \Ela \fone \wedge \ftwo$}
\end{prooftree}
\end{minipage}
\end{minipage}

\normalsize
\begin{center}
Counting Rules 
\end{center}
\small

\begin{minipage}{\linewidth}
\begin{minipage}[t]{0.45\linewidth}
\begin{prooftree}
\AxiomC{$\mu(\model{\bone})=0$}
\RightLabel{$R^\Era_\mu$}
\UnaryInfC{$\vdash \bone \Era \fone$}
\end{prooftree}
\end{minipage}
\hfill
\begin{minipage}[t]{0.5\linewidth}
\begin{prooftree}
\AxiomC{$\mu(\model{\bone}) = 1$}
\RightLabel{$R^\Ela_\mu$}
\UnaryInfC{$\vdash \bone \Ela \fone$}
\end{prooftree}
\end{minipage}
\end{minipage}

\begin{minipage}{\linewidth}
\begin{minipage}[t]{0.45\linewidth}
\begin{prooftree}
\AxiomC{$\vdash \btwo \Era \fone$}
\AxiomC{$\mu(\model{\btwo}) \ge q$}
\RightLabel{$R^\Era_\BOX$}
\BinaryInfC{$\vdash \bone \Era \BOX^q\fone$}
\end{prooftree}
\end{minipage}
\hfill
\begin{minipage}[t]{0.5\linewidth}
\begin{prooftree}
\AxiomC{$\vdash \btwo \Ela \fone$}
\AxiomC{$\mu(\model{\btwo}) < q$}
\RightLabel{$R^\Ela_\BOX$}
\BinaryInfC{$\vdash \bone \Ela \BOX^q\fone$}
\end{prooftree}
\end{minipage}
\end{minipage}

\begin{minipage}{\linewidth}
\begin{minipage}[t]{0.45\linewidth}
\begin{prooftree}
\AxiomC{$\vdash \btwo \Ela \fone$}
\AxiomC{$\mu(\model{\btwo}) < q$}
\RightLabel{$R^\Era_\DIA$}
\BinaryInfC{$\vdash \bone \Era \DIA^q\fone$}
\end{prooftree}
\end{minipage}
\hfill
\begin{minipage}[t]{0.5\linewidth}
\begin{prooftree}
\AxiomC{$\vdash \btwo \Era \fone$}
\AxiomC{$\mu(\model{\btwo}) \ge q$}
\RightLabel{$R^\Ela_\DIA$}
\BinaryInfC{$\vdash \bone \Ela \DIA^q\fone$}
\end{prooftree}
\end{minipage}
\end{minipage}

}}
\caption{Sequent Calculus $\mathbf{LK}_{\CPLc}$}\label{fig:CPLcproof}
\end{center}
\end{figure}

%%%% SECTION
%\newpage
% !TEX root = some_remarks.tex

\section{On the Expressive Power of $\CPLc$}\label{sec1}

Our counting logics are strongly related to probabilistic reasoning
and, indeed,
$\CPLc$ offers a natural model for events 
corresponding to Bernoulli distributions.
In particular, we show how counting formulas can simulate
experiments associated to dyadic distribution.
To do so, we start by introducing auxiliary quantifiers
to express exact probability in a compact way (Section~\ref{sec:3.1}).
Then, we show that our formulas provide a natural formalism
to express (quantitative properties of) events associated
to \emph{dyadic} distributions (Section~\ref{sec:3.2}).
In particular, it is proved that counting formulas can simulate
events associated with any such distribution, but
those related to non-dyadic ones only in an approximate way.
Finally, it is sketched a natural generalization of $\CPLc$ able to simulate events associated to any discrete distribution (Section~\ref{appendix}).

%%%%%% SUBSECTION
\subsection{Expressing Exact Probability}\label{sec:3.1}
In $\CPLc$, we can easily express that
the probability for a formula to be true
is \emph{precisely} the given one.
In fact, for the sake of readability, 
we introduce auxiliary quantifiers,
$\BBOX^q$ and $\BDIA^q$, 
intuitively meaning that their
argument formula is true with probability (resp.)
strictly greater or smaller than $q$.

\begin{notation}
So-called \emph{white counting quantifiers} are interpreted as follows:

\begin{minipage}{\linewidth}
\begin{minipage}[t]{0.2\linewidth}
$$
\model{\BBOX^q \fone} := \begin{cases}
\twoOm \ &\text{if } \mu_{\mathscr{C}}(\model{\fone}) > q \\
\emptyset \ &\text{otherwise}
\end{cases}
$$
\end{minipage}
\hfill
\begin{minipage}[t]{0.4\linewidth}
$$
\model{\BDIA^q\fone} :=
\begin{cases}
\twoOm \ &\text{if } \mu_{\mathscr{C}}(\model{\fone}) \leq q \\
\emptyset \ &\text{otherwise.}
\end{cases}
$$
\end{minipage}
\end{minipage}
\end{notation}
\noindent
Clearly, these quantifiers do not extend the expressive power
of $\CPLc$, as they are easily definable in terms
of the primitive $\BOX^q$ and $\DIA^q$, see Proposition~\ref{prop:CD}.

\begin{lemma}\label{lemma1}
For every formula of $\CPLc$ $\fone$,
and $q\in \mathbb{Q}_{[0,1]}$,
$$
\mu_{\mathscr{C}}(\model{\fone}) \triangleright q 
\ \ \ \Leftrightarrow 
\ \ \ \mu_{\mathscr{C}}(\model{\neg \fone}) \triangleleft
1 - q,
$$
with $(\triangleright,\triangleleft) \in
\{(\ge,\leq), (\leq, \ge), (>,<), (<,>)\}$.
\end{lemma}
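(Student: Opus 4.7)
The plan is to reduce the statement to a single elementary identity about the measure of complements, and then to dispatch the four cases of $(\triangleright,\triangleleft)$ by the arithmetic of inequalities under the map $x\mapsto 1-x$.

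First I would invoke the semantic clause for negation from Definition~\ref{def:semantics}, which gives $\model{\neg \fone}=\twoOm\setminus \model{\fone}$. Since $\model{\fone}\in\mathscr{B}(\twoOm)$ for every counting formula $\fone$ (a routine induction on $\fone$, using that $\mathscr{B}(\twoOm)$ is a $\sigma$-algebra and contains all cylinders, and noting that $\model{\BOX^q\ftwo}$ and $\model{\DIA^q\ftwo}$ are always either $\twoOm$ or $\emptyset$), its complement is measurable too. By Kolmogorov's axioms, $\mu_{\mathscr{C}}(\model{\fone})+\mu_{\mathscr{C}}(\model{\neg\fone})=\mu_{\mathscr{C}}(\twoOm)=1$, hence
$$
\mu_{\mathscr{C}}(\model{\neg\fone})=1-\mu_{\mathscr{C}}(\model{\fone}).
$$

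The rest is purely arithmetic. I would simply observe that for any real $r\in[0,1]$ and rational $q\in\mathbb{Q}_{[0,1]}$, subtracting from $1$ reverses the direction of the inequality, while strict inequalities remain strict and non-strict ones remain non-strict. Applying this with $r=\mu_{\mathscr{C}}(\model{\fone})$, the four cases read: $r\ge q \iff 1-r\le 1-q$, $r\le q\iff 1-r\ge 1-q$, $r>q\iff 1-r<1-q$, and $r<q\iff 1-r>1-q$. Substituting the identity above on the left-hand side yields the claim for every $(\triangleright,\triangleleft)$ listed.

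There is no real obstacle here: the only non-trivial ingredient is the measurability of $\model{\fone}$, which is immediate by induction and could even be taken for granted given the phrasing ``formulas are all measurable'' in Section~\ref{sec:CPL}. If one wanted to be fully explicit, the induction step for the counting quantifiers is the easiest case, since $\model{\BOX^q\ftwo},\model{\DIA^q\ftwo}\in\{\emptyset,\twoOm\}\subseteq \mathscr{B}(\twoOm)$.
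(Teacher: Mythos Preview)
Your proof is correct and follows essentially the same route as the paper's own proof sketch: both derive $\mu_{\mathscr{C}}(\model{\neg\fone})=1-\mu_{\mathscr{C}}(\model{\fone})$ from the semantic clause for negation and then reduce the claim to the arithmetic of inequalities under $x\mapsto 1-x$. The only difference is that you spell out the measurability induction and all four cases, whereas the paper treats one representative case and leaves the rest implicit.
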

\begin{proof}[Proof Sketch]
Let us consider the case $\leq,\ge$ only.
Since
$\mu_{\mathscr{C}}(\model{\neg \fone})
= \mu_{\mathscr{C}}(\twoOm - \model{\fone})
= 1 - \mu_{\mathscr{C}}(\model{\fone})$,
trivially
$\mu_{\mathscr{C}}(\model{\fone})\ge q
\Leftrightarrow 1 - \mu_{\mathscr{C}}(\model{\fone})
\leq 1-q \Leftrightarrow \mu_{\mathscr{C}}(\model{\neg F})
\leq 1-q$.
\end{proof}

%%% PROPOSITION CD
\begin{prop}\label{prop:CD}
For every formula of $\CPLc$
$\fone$, and $q\in \mathbb{Q}_{[0,1]}:$

\begin{minipage}{\linewidth}
\begin{minipage}[t]{0.5\linewidth}
\begin{align*}
\BOX^q \neg \fone &\equiv \BDIA^{1-q}\fone \\
\BOX^q \neg \fone &\equiv \neg \BBOX^{1-q} \fone 
\end{align*}
\end{minipage}
\hfill
\begin{minipage}[t]{0.5\linewidth}
\begin{align*}
\DIA^q\neg \fone &\equiv \BBOX^{1-q} \fone \\
\DIA^{q} \neg \fone &\equiv \neg \BDIA^{1-q}\fone.
\end{align*}
\end{minipage}
\end{minipage}
\end{prop}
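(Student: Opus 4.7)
The plan is to exploit two simple observations. First, any formula of the shape $\BOX^q \ftwo$, $\DIA^q \ftwo$, $\BBOX^q \ftwo$, $\BDIA^q \ftwo$, or the negation of such a formula, takes a 0/1 interpretation: its semantics is either $\twoOm$ or $\emptyset$. Consequently each of the four equivalences reduces to showing that the two sides are valid under exactly the same numerical condition on $\mu_{\mathscr{C}}(\model{\fone})$. Second, Lemma~\ref{lemma1} lets me trade any inequality bearing on $\mu_{\mathscr{C}}(\model{\neg \fone})$ for the dual inequality bearing on $\mu_{\mathscr{C}}(\model{\fone})$, with the threshold $q$ replaced by $1-q$.

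Concretely, for $\BOX^q \neg \fone \equiv \BDIA^{1-q}\fone$ I would unfold Definition~\ref{def:semantics} on both sides: the left-hand side is valid iff $\mu_{\mathscr{C}}(\model{\neg \fone}) \ge q$, while the right-hand side is valid iff $\mu_{\mathscr{C}}(\model{\fone}) \le 1-q$. The case $(\ge,\le)$ of Lemma~\ref{lemma1} then shows these conditions coincide, so both sides have the same interpretation in every case. The remaining three equivalences are proved by the same recipe, invoking the corresponding clause of Lemma~\ref{lemma1}: the case $(<,>)$ handles $\DIA^q \neg \fone \equiv \BBOX^{1-q}\fone$ directly, and for the two equivalences whose right-hand side is a negated white quantifier I would additionally use the fact that $\neg$ applied to a 0/1-valued formula simply flips its validity. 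For instance, $\neg\BBOX^{1-q}\fone$ is valid iff $\BBOX^{1-q}\fone$ is invalid, iff $\mu_{\mathscr{C}}(\model{\fone})$ is not strictly greater than $1-q$, i.e.\ iff $\mu_{\mathscr{C}}(\model{\fone}) \leq 1-q$, which matches the condition obtained above for $\BOX^q \neg \fone$.

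There is no substantive obstacle here: once Lemma~\ref{lemma1} is in place, the argument is pure bookkeeping. The only point requiring a little care is the alternation between strict and non-strict inequalities when moving from a black quantifier ($\BOX$ or $\DIA$) to a white one ($\BBOX$ or $\BDIA$); this is exactly why the white quantifiers were declared with the opposite strictness ($>$ and $\leq$) to the primitive black ones ($\ge$ and $<$), and why all four cases $(\ge,\le)$, $(\le,\ge)$, $(>,<)$, $(<,>)$ of Lemma~\ref{lemma1} get used once each across the four equivalences.
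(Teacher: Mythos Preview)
Your proposal is correct and follows essentially the same route as the paper: unfold the semantics of the quantified formulas and invoke Lemma~\ref{lemma1} to pass between inequalities on $\mu_{\mathscr{C}}(\model{\neg\fone})$ and on $\mu_{\mathscr{C}}(\model{\fone})$. The only cosmetic difference is that for the two equivalences with a negated white quantifier on the right, the paper first rewrites $\BOX^q\neg\fone$ as $\neg\DIA^q\neg\fone$ (and dually) and then substitutes the already-established equivalence, whereas you compute the validity condition of $\neg\BBOX^{1-q}\fone$ directly; both amount to the same thing. One small inaccuracy in your closing remark: only the cases $(\ge,\le)$ and $(<,>)$ of Lemma~\ref{lemma1} are actually needed, since the third and fourth equivalences reduce to the same numerical conditions as the first and second.
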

\begin{proof}
The proof is based on semantic definition and
Lemma~\ref{lemma1} above:

\begin{minipage}{\linewidth}
\begin{minipage}[t]{0.4\linewidth}
\begin{align*}
\model{\BOX^q\neg \fone} &=
\begin{cases}
\twoOm &\text{if } \mu_{\mathscr{C}}(\model{\neg \fone})
\ge q \\
\emptyset &\text{otherwise}
\end{cases} \\
&\stackrel{L~\ref{lemma1}}{=} \begin{cases}
\twoOm &\text{if } \mu_{\mathscr{C}}(\model{\fone})
\leq 1-q \\
\emptyset &\text{otherwise}
\end{cases} \\
&\ = \model{\BDIA^{1-q}\fone} \\
\\
\model{\BOX^q \neg \fone} & \ =
\model{\neg \DIA^{q}\neg \fone} \\
& \ = \model{\neg \BBOX^{1-q}\fone}
\end{align*}
\end{minipage}
\hfill
\begin{minipage}[t]{0.6\linewidth}
\begin{align*}
\model{\DIA^q\neg \fone}
&= \begin{cases}
\twoOm &\text{if } \mu_{\mathscr{C}}(\model{\neg \fone})
< q \\
\emptyset &\text{otherwise}
\end{cases}\\
&\stackrel{L~\ref{lemma1}}{=}
\begin{cases}
\twoOm &\text{if } \mu_{\mathscr{C}}(\model{\fone})
> 1- q  \\
\emptyset &\text{otherwise}
\end{cases} \\
&\ = \model{\BBOX^{1-q}\fone}. \\
\\
\model{\BOX^q\neg \fone} & \ = \model{\neg \BOX^q\neg \fone}
\\ 
& \ = \model{\neg \BDIA^{1-q}\fone}.
\end{align*} 
\end{minipage}
\end{minipage}

\end{proof}
\noindent
Nevertheless, via $\BBOX^q$ and $\BDIA^q$, we can
express exact probability \emph{in a compact way}.
For example,

\begin{ex}\label{ex1}
We formalize that the formula $\fone=\mathbf{1} \wedge
\mathbf{2}$ is true with probability $\frac{1}{4}$
as:
$$
\fone_{ex} = \BOX^{1/4}(\mathbf{1}\wedge \mathbf{2}) \wedge
\BDIA^{1/4}(\mathbf{1} \wedge \mathbf{2}).
$$
\end{ex}
\noindent
We can even extend $\mathbf{LK}_{\CPLc}$
with the derivable rules for $\BBOX$ and $\BDIA$
illustrated in Figure~\ref{fig:whiteRules}.

\begin{figure}[h!]
\begin{center}
\framebox{
\parbox[t][2cm]{11.5cm}{
\begin{minipage}{\linewidth}
\begin{minipage}[t]{0.45\linewidth}
\begin{prooftree}
\AxiomC{$\vdash \btwo \Era \fone$}
\AxiomC{$\mu(\model{\btwo}) > q$}
\RightLabel{$R^\Era_{\BBOX}$}
\BinaryInfC{$\vdash \bone \Era \BBOX^q \fone$}
\end{prooftree}
\end{minipage}
\hfill
\begin{minipage}[t]{0.5\linewidth}
\begin{prooftree}
\AxiomC{$\vdash \btwo \Ela \fone$}
\AxiomC{$\mu(\model{\btwo}) \leq q$}
\RightLabel{$R^\Ela_{\BBOX}$}
\BinaryInfC{$\vdash \bone \Ela \BBOX^q\fone$}
\end{prooftree}
\end{minipage}
\end{minipage}

\begin{minipage}{\linewidth}
\begin{minipage}[t]{0.45\linewidth}
\begin{prooftree}
\AxiomC{$\vdash \btwo \Era \fone$}
\AxiomC{$\mu(\model{\btwo}) \leq q$}
\RightLabel{$R^\Era_{\BDIA}$}
\BinaryInfC{$\vdash \bone \Era \BDIA^q \fone$}
\end{prooftree}
\end{minipage}
\hfill
\begin{minipage}[t]{0.5\linewidth}
\begin{prooftree}
\AxiomC{$\vdash \btwo \Ela \fone$}
\AxiomC{$\mu(\model{\btwo}) > q$}
\RightLabel{$R^\Ela_{\BDIA}$}
\BinaryInfC{$\vdash \bone \Ela \BDIA^q\fone$}
\end{prooftree}
\end{minipage}
\end{minipage}
}}
\caption{Rules for $\BBOX$ and $\BDIA$}\label{fig:whiteRules}
\end{center}
\end{figure}

%%%%%% Subsection
%%%%%% SIMULATING DYADIC DISTRIBUTIONS
\subsection{Simulating Dyadic Distributions}\label{sec:3.2}
It is natural to see atomic formulas of $\CPLc$
as corresponding to infinite sequences of \emph{fair} 
coin tosses and, more in general,
counting formulas as formalizing
experiments associated with \emph{specific} probability distributions. 
For instance, when tossing an unbiased coin twice,
the probability that it returns $\head$ both times
is $\frac{1}{4}$.
This fact is expressed in $\CPLc$ by $F_{ex}$ above, (which
is easily proved valid in our semantics 
and derivable in the corresponding proof system, 
see Appendix~\ref{app:1}).
Generally speaking, counting formulas can simulate events
associated with \emph{any dyadic} distribution, 
but those related to non-dyadic ones only in an approximate
way.\footnote{As we shall see, 
a generalization of $\CPLc$ associated
with a probability space 
$(2^{\mathbb{N}}, \sigma(\mathscr{C}), \mu_{\mathscr{C}})$,
where $\mu_{\mathscr{C}}$ is not necessarily
the measure of i.i.d. sequences are cursorily
presented in Section~\ref{appendix}.
Clearly, these logics can also express events related
to non-dyadic distributions.}

\paragraph{Dyadic Distributions.}
In particular, we formalize atomic sampling from a
Bernoulli distribution of \emph{non-reducible} parameter
$p=\frac{m}{2^n}$ by molecular formulas of $\CPLc$, 
while events are expressed combining such formulas in the usual way.
%e.g.~an atomic event in a distribution of i.i.d. (two) outcomes of success parameter $p=\frac{1}{4}$ is expressed by $\mathbf{i}\wedge \mathbf{j}$, for $i\neq j\in \Nat$
%
Let us consider a simple example of an experiment
associated with a dyadic distribution.

%%% EXAMPLE
\begin{ex}\label{ex2}
Let a \emph{biased} coin return
$\head$ only 25$\%$ of the time.
Clearly, in this case, a single toss is not simulated
by an atomic formula of $\CPLc$,
but by a complex one, namely
$(\mathbf{i} \wedge \mathbf{j})$, 
with $i,j\in \Nat$  ``fresh''.
Then, we can even express properties concerning events.
For instance, that the probability for at least one of two subsequent 
biased tosses to return $\head$ is greater than 
$\frac{1}{3}$ is formalized by the (valid) formula:
$$
\fone_{bias} = \BOX^{1/3}\big((\mathbf{1}\wedge \mathbf{2})
\vee (\mathbf{3}\wedge \mathbf{4})\big).
$$
\end{ex}
\noindent
Specifically, we prove that formulas of $\CPLc$
are interpreted as events associated with 
\emph{dyadic distribution} relying on
the notion of cylinder of rank $n$ (and on Corollary~\ref{cor:CylDyad}).
We start by showing that any counting formula (as finite)
is interpreted as a cylinder of a proper rank.
%\footnote{Generalization of $\CPLc$ associated with a probability space  $(\twoOm, \sigma(\mathscr{C}), \mu)$,  where $\mu$ is not necessarily the measure of i.i.d. sequences are cursorily presented in Appendix~\ref{app:generalizing}.  Clearly, these logics can also express events related to non-dyadic distributions.}
%

\begin{lemma}
For any formula of $\CPLc$ $\fone$,
there is a cylinder of rank $k$ $\cyl_K$ 
such that $\model{\fone} = \cyl_K$.
\end{lemma}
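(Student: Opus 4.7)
The plan is to proceed by structural induction on the counting formula $\fone$, relying throughout on the fact (implicit in Definition~\ref{df:cylRank}) that a cylinder of rank $n$ can always be re-expressed as a cylinder of rank $m$ for any $m \geq n$: given $\cyl_H$ with $H \subseteq \Bool^n$, define $H' \subseteq \Bool^m$ as the set of $(u_1,\dots,u_m)$ whose prefix $(u_1,\dots,u_n)$ belongs to $H$; then $\cyl_H = \cyl_{H'}$. This ``rank lifting'' trick is what will let me combine cylinders coming from subformulas of different ranks.

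For the base case, $\fone = \mathbf{i}$. Then $\model{\mathbf{i}} = Cyl(i) = \cyl_{H}$ with $H = \{(u_1,\dots,u_i) \in \Bool^i \mid u_i = \one\}$, a cylinder of rank $i$. For the negation case, if by the induction hypothesis $\model{\fone} = \cyl_H$ is a cylinder of rank $k$, then $\model{\neg \fone} = \twoOm \setminus \cyl_H = \cyl_{\Bool^k \setminus H}$ is itself a cylinder of rank $k$. For conjunction and disjunction, suppose $\model{\fone_1} = \cyl_{H_1}$ has rank $k_1$ and $\model{\fone_2} = \cyl_{H_2}$ has rank $k_2$; set $k = \max(k_1,k_2)$ and use the lifting remark above to re-express both as cylinders of rank $k$, with sets $H_1', H_2' \subseteq \Bool^k$. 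Then $\model{\fone_1 \wedge \fone_2} = \cyl_{H_1' \cap H_2'}$ and $\model{\fone_1 \vee \fone_2} = \cyl_{H_1' \cup H_2'}$, both cylinders of rank $k$.

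The counting quantifier cases are immediate from Definition~\ref{def:semantics}: $\model{\BOX^q \ftwo}$ and $\model{\DIA^q \ftwo}$ are each either $\twoOm$ or $\emptyset$, and both of these are trivially cylinders. Indeed, $\twoOm = \cyl_{\Bool^0}$ (or, more concretely, $\cyl_{\Bool^1}$ of rank $1$) and $\emptyset = \cyl_{\varnothing}$ of any rank.

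The only slightly delicate point is the rank-lifting lemma used in the Boolean cases, since without it one cannot intersect or union two cylinders of different ranks inside the same class $\mathscr{C}_n$. Once that observation is in place, the induction is entirely mechanical. As a corollary, combining this with Corollary~\ref{cor:CylDyad} yields that $\mu_{\mathscr{C}}(\model{\fone})$ is always a dyadic rational $\tfrac{l}{2^m}$, which is exactly what Section~\ref{sec:3.2} will need in order to show that $\CPLc$ simulates precisely the events associated with dyadic distributions.
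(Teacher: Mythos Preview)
Your proof is correct and follows essentially the same structural induction as the paper's own argument: base case via $Cyl(i)$, negation via complement within the same rank, binary connectives via rank lifting to $\max(k_1,k_2)$, and the quantified cases via the observation that $\twoOm$ and $\emptyset$ are themselves cylinders. The only cosmetic difference is that you isolate the rank-lifting observation up front, whereas the paper performs it inline during the conjunction case; if anything your base case is slightly more precise, since $Cyl(i)$ is a cylinder of rank $i$ with $H=\{(u_1,\dots,u_i)\mid u_i=\one\}$ rather than a thin cylinder in the sense of Definition~\ref{df:cylRank}.
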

\begin{proof}
The proof is by induction on the structure of $\fone$:
\begin{itemize}
\itemsep0em

\item[$*$] $\fone=\mathbf{i}$ for some $i\in \Nat$.
Then, by Definition~\ref{def:semantics}, $\model{\mathbf{i}}
=Cyl(i)$, which is a thin cylinder.

\item[$*$] $\fone = \neg \ftwo$. By IH, there is a $k\in\Nat$
and a cylinder of rank $k$  $\cyl_K$,
such that $\model{\ftwo}=\cyl_K$.
Let $K'= \Bool^k \setminus K$.
Then, $\model{\neg \ftwo} = \twoOm - \model{\ftwo}
= \twoOm - \cyl_K = \cyl_{K'}$ is clearly a cylinder of rank $k$
as well.

\item[$*$] $\fone$ = $\ftwo \wedge \fthree$.
By IH, there are exist $k_1, k_2\in\Nat$
and cylinders of rank $k_1, k_2$,
such that (resp.) $\model{\ftwo} = \cyl_{K_1}$
and $\model{\fthree}= \cyl_{K_2}$.
Then,
if $k_1=k_2$, $\model{\fone} = \model{\ftwo} \cap \model{\fthree}
= \cyl_{K_1} \cap \cyl_{K_2} = \cyl_{K_1\cap K_2}$
which is a cylinder of rank $k_1$ as well.
Otherwise, assume $k_1> k_2$ (the case $k_2>k_1$ is equivalent). Let $K_2'$ consists of the sequences 
$(u_1,\dots, u_{k_1})$ in $\Bool^{k_1}$ such that the truncated 
sequence $(u_1,\dots, u_{k_2})$ is in $K_2$.
Then, we obtain an alternative, but equivalent 
representation for $\cyl_{K_2}$:
$\cyl_{K_2}' = \{\omega : (\omega(1), \dots \omega(k))
\in K_2'\}$.
So, $\model{\fone}= \model{\ftwo} \cap \model{\fthree}
= \cyl_{K_1} \cap \cyl_{K_2}' = \cyl_{K_1 \cap K_2'}$,
which is a cylinder of rank $k_1$.

\item[$*$] $\fone$ = $\ftwo \wedge \fthree$ is similar to the
case above. 

\item[$*$] $\fone=\BOX^q \ftwo$. Then, by
Definition~\ref{def:semantics}, 
either $\model{\fone}=\twoOm$ or $\model{\fone} = \emptyset$
which are both cylinder of rank (resp.) 0 and $k$. 

\item[$*$] $\fone=\DIA^q \ftwo$. Equivalent to the case above.

\end{itemize}
\end{proof}

\noindent
Then, since by Corollary~\ref{cor:CylDyad},
for any $n$ the measure of a cylinder of rank $n$ is dyadic,
we conclude that for any formula of $\CPLc$
its measure is dyadic as well.\footnote{A ``syntactic proof'' of this result is obtained as a corollary of Lemmas presented in Section~\ref{sec:measuring}.}

\begin{lemma}\label{lemma:nonDyad}
For any formula of $\CPLc$ $\fone$, there are
$n,m\in \Nat$, such that 
$\mu_{\mathscr{C}}(\model{\fone}) =\frac{m}{2^n}$.
\end{lemma}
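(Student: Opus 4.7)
The plan is to deduce this statement as an immediate corollary of the preceding lemma (which states that every $\CPLc$ formula is interpreted as a cylinder of some finite rank $k$) together with Corollary~\ref{cor:CylDyad} (which ensures that under the fair-coin measure any cylinder of rank $n$ has measure of the form $l/2^m$). Since the hard combinatorial work — tracking how the rank is preserved under negation, conjunction, disjunction, and the counting quantifiers $\BOX^q, \DIA^q$ — has already been dispatched by induction in the previous lemma, nothing beyond a direct chaining of these two results is required here.

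Concretely, I would first invoke the previous lemma to pick some $k \in \Nat$ and a set $K \subseteq \Bool^k$ such that $\model{\fone} = \cyl_K$. Then I would apply Corollary~\ref{cor:CylDyad}, which, since $p_\zero = p_\one = \tfrac{1}{2}$ in the standard cylinder space $\mathscr{P}$, yields $l, m \in \Nat$ with $\mu_{\mathscr{C}}(\cyl_K) = l/2^m$. Setting $n := m$ and taking $m$ (in the notation of the statement) to be the numerator $l$ delivers the conclusion. One can, if one wishes, be slightly more explicit and observe that Definition~\ref{df:measureCyl} gives $\mu_{\mathscr{C}}(\cyl_K) = \sum_{K}(1/2)^k = |K|/2^k$, so in fact the exponent $n$ can always be taken to be exactly the rank $k$ produced by the previous lemma, and the numerator is the cardinality of the defining set $K$.

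The only potential subtlety worth highlighting is that the previous lemma must genuinely produce a cylinder of some \emph{finite} rank even for formulas containing the counting quantifiers $\BOX^q \ftwo$ and $\DIA^q \ftwo$, where the semantics returns either $\twoOm$ or $\emptyset$. These two sets can be viewed as cylinders of rank $0$ (with $K = \Bool^0$ and $K = \emptyset$ respectively), giving measures $1 = 1/2^0$ and $0 = 0/2^0$, both dyadic. Thus there is no ``main obstacle'' in this step beyond reading off the constants from the preceding two results, and the promised syntactic proof via Section~\ref{sec:measuring} is merely an alternative route to the same conclusion.
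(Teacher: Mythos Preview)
Your proposal is correct and matches the paper's own argument exactly: the paper's proof is a one-line appeal to Corollary~\ref{cor:CylDyad} together with the preceding lemma that every $\CPLc$ formula is interpreted as a cylinder of some finite rank. Your write-up is in fact more careful than the paper's, spelling out the rank-$0$ treatment of $\twoOm$ and $\emptyset$ for the quantifier cases and the explicit count $|K|/2^k$.
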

\begin{proof}
By putting Corollary~\ref{cor:CylDyad} and Lemma~\ref{lemma:nonDyad} together.
\end{proof}

%Per definizione di cilindro di rango $n$ (Billingsley, p. 27, eq. 2.17). Si dimostra: (1) per ogni formula $F$,  esiste un $n$ tale che l'insieme $\model{F}$ \'e un cilindro di rango $n$; (2) La misura di ogni cilindro \'e diadica. (1) \'e per induzione su $F$. (2) \'e un lemma sui cilindri di rango $n$.

\paragraph{Discrete Distributions with $\sharp X=2^n$}
In the same way, we can (quantitatively) simulate
any discrete distribution with $\sharp X=2^n$,
(which are possibly non-Bernoulli ones).
Of course,
any information concerning the nature of variables
involved in the experiment is lost,
but the quantitative aspects, that is the probability of events,
are all preserved through the formalization.
Let us consider an example of such simulation,
which, as we shall see, passes through 
(somehow arbitrary) ``conventions''.

\begin{ex}
We have four cards, 
each representing a lynx from an existing species,
i.e.~\emph{lynx canadensis},
\emph{lynx lynx}, \emph{lynx pardinus}
and \emph{lynx rufus}.
We can express that, by randomly choosing a card, the
probability not to catch the \emph{lynx lynx} is $\frac{3}{4}$,
for example, by the following (valid) formula:
$$
\fone_{nll} = \BOX^{3/4}\big(\neg(\mathbf{1} \wedge \mathbf{2})\big)
\wedge
\BDIA^{3/4}\big(\neg(\mathbf{1}\wedge \mathbf{2})\big),
$$
whereas the probability that the card represents
a European lynx, that is either the \emph{lynx lynx}
or the \emph{lynx pardinus}, is $\frac{1}{2}$
is formalized by the (valid) formula:
\begin{align*}
F_{el} &= \BOX^{1/2}\big((\mathbf{1}\wedge\mathbf{2})
\vee (\mathbf{1} \wedge \neg \mathbf{2})\big)
\wedge
\BDIA^{1/2}\big((\mathbf{1}\wedge\mathbf{2})
\vee (\mathbf{1} \wedge \neg \mathbf{2})\big) \\
&= \BOX^{1/2}\mathbf{1} \wedge
\BDIA^{1/2}\mathbf{1}.
\end{align*}
If one repeats twice the (random) picking of 
a card from the four available ones,
then the probability that a \emph{lynx rufus}
is picked both times is $\frac{1}{16}$ as simulated
by the formula below:
$$
F_{lr2} = \BOX^{1/16}\big((\neg \mathbf{1} \wedge \neg \mathbf{2})
\wedge (\neg \mathbf{3} \wedge \neg \mathbf{4})\big)
\wedge
\BDIA^{1/16}\big((\neg \mathbf{1} \wedge \neg \mathbf{2})
\wedge (\neg \mathbf{3} \wedge \neg \mathbf{4})\big).
$$
\end{ex}

\paragraph{Non-Dyadic Bernoulli Distributions.}
Something different happens when considering experiments related
to non-dyadic distributions.
Indeed, by Lemma~\ref{lemma:nonDyad},
formulas of $\CPLc$
%since the number of counting variables appearing in $\CPLc$-formulas is \emph{finite},
cannot express these events as they are always
interpreted as sets of dyadic measure.
Nevertheless, we can somehow simulate events
associated to non-dyadic distributions
%e.g. tossing a biased coin returning \textsc{Head} with probability $\frac{1}{3}$, 
\emph{in an approximate way}.
%

%%% EXAMPLE
\begin{ex}\label{ex:ndyad}
Let us consider a biased coin returning $\head$ 
only $\frac{1}{3}$ of the time.
We cannot simulate this event in $\CPLc$.
We can however ``approximate'' it with $n=2m$
variables of $\CPLc$ in the following sense.
For instance, if $m=2$ we can \emph{down-approximate} 
a single toss of the biased coin as follows:

$$
\fone_{ndy}= 
(\mathbf{1} \wedge \mathbf{2})
\vee 
\big((\neg \mathbf{1} \wedge \mathbf{2})
\wedge (\mathbf{3} \wedge \mathbf{4})\big).
$$
Notably, disjuncts are mutually contradictory
so $\mu_{\mathscr{C}}(\model{\fone_{ndy}}) = 
\frac{1}{2^2} + \frac{1}{2^4}
= \frac{5}{16}$.

If $m=3$, (down-)approximation is obtained as follows,
$$
\fone_{ndy}' = 
(\mathbf{1} \wedge \mathbf{2})
\vee 
\big((\neg \mathbf{1} \wedge \mathbf{2})
\wedge (\mathbf{3} \wedge \mathbf{4})\big)
\vee
\big((\neg \mathbf{1} \wedge \mathbf{2})
\wedge (\neg \mathbf{3} \wedge \mathbf{4})
\wedge (\mathbf{5} \wedge \mathbf{6})\big).
$$
In this case, $\mu_{\mathscr{C}}(\model{\fone_{ndy}}')
= \frac{1}{2^2} + \frac{1}{2^4} +
\frac{1}{2^6} = \frac{21}{64}$.

In general, 
increasing $n$, our formula of $\CPLc$
approximates the desired atomic event in a 
more and more precise way.
\end{ex}
\noindent
Although events associated with non-dyadic
distributions cannot be expressed in $\CPLc$ in a precise
way, when switching to the measure quantified language $\MQPA$, as presented in~\cite{CiE}, this formalization becomes
possible.\footnote{This fact is coherent with our randomized
arithmetization result~\cite[Theorem 3]{CiE}.}
In the following Section~\ref{appendix}, 
we provide an alternative, 
more natural way to generalize $\CPLc$ so to express such events.

%%%%%%% DIGRESSION

\subsection{Generalizing $\CPLc$}\label{appendix}

As seen, the semantics for $\CPLc$ is associated
with a canonical cylinder space 
$\mathscr{P} = (\twoOm,\sigma(\mathscr{C}), 
\mu_{\mathscr{C}})$,
where $\mu_{\mathscr{C}}$ is the standard measure
$\mu_{\mathscr{C}}(Cyl(i))=\frac{1}{2}$
for any $i\in\Nat$, 
intuitively corresponding to tossing \emph{fair}
coins~\cite{Billingsley}.
We can generalize this framework in a natural
way, so to allow
the cylinder measure  to possibly express experiments
associated to distributions
other than dyadic ones.
So, we define extended $\CPLc^*$,
this time associated
with $\mathscr{P}^\star 
= (2^\mathbb{N}, \sigma(\mathscr{C}), \mu^\star)$,
where $\mu^\star$ can be \emph{any} (well-defined) probability
measure over $\sigma(\mathscr{C})$.
  In particular, the grammar and semantics 
  for $\CPLc^\star$ is as in
Definition~\ref{def:semantics},
except for quantifiers, which are now 
defined using the probability measure,
$\mu^\star$.

\begin{defn}[Grammar and Semantics of $\CPLc^\star$]
Extended formulas are defined by substituting standard
counting quantifiers with $\BOX^q_{\mu^*}$ and $\DIA^{q}_{\mu^\star}$,
the interpretation of which is generalized as basing on
$\mathscr{P}^\star$:

\begin{minipage}{\linewidth}
\begin{minipage}[t]{0.1\linewidth}
$$
\model{\BOX^q_{\mu^\star}\fone} =
\begin{cases}
\twoOm \ &\text{if } \mu^\star(\llbracket \fone\rrbracket) \ge q \\
\emptyset \ &\text{otherwise}
\end{cases}
$$
\end{minipage}
\hfill
\begin{minipage}[t]{0.39\linewidth}
$$
\model{\DIA^q_{\mu^\star}\fone} = \begin{cases}
\twoOm \ &\text{if } \mu^\star(\model{\fone}) < q \\
\emptyset \ &\text{otherwise.}
\end{cases}
$$
\end{minipage}
\end{minipage}
\end{defn}
\noindent
Then, $\CPLc$-formulas $\BOX^q\fone$
and $\DIA^q\fone$ become special cases of 
the extended
ones, (resp.) $\BOX^q_{\mu^\star}\fone$ and $\DIA^q_{\mu^\star}
\fone$ where $\mu^\star=\mu_{\mathscr{C}}$.
On the other hand, in $\CPLc^\star$
we can simulate experiments corresponding
to tossing (arbitrarily) \emph{biased} coins 
in a very simple way.
Let us consider the experiment of Example~\ref{ex:ndyad}
once again.

\begin{ex}
Let a biased coin again return $\head$ only $\frac{1}{3}$
of the time.
We consider a specific $\mu*$ such that 
$\mu^* (Cyl(i)) = \frac{1}{3}$ for any $i\in\Nat$.
Then, we can express that the probability for subsequent
tosses to be successful is greater than $\frac{1}{9}$
by the formula of $\CPL_0^\star$:
$
\fone_\star = \BOX^{1/9}_{\mu^*}(\mathbf{1} \wedge \mathbf{2}).
$
Furthermore, since $\mu^*(\model{\mathbf{1}\wedge \mathbf{2}}) = \frac{1}{3}\cdot \frac{1}{3}$,
the formula is valid, i.e. $\model{\BOX^{1/9}_{\mu^*}(\mathbf{1}\wedge \mathbf{2})}=\twoOm$.
\end{ex}

Observe that it is easy to define a sound and complete 
proof system for $\CPL_0^\star$.
Indeed, no substantial change is required with respect
to the calculus $\mathbf{LK}_{\CPLc}$ of~\cite[Sec. 2.2]{ICTCS}.
Indeed, only semantic conditions are related
to probability measure, so one only needs to generalize
$\mu$-rules substituting $\mu_{\mathscr{C}}$
with $\mu^\star$, in the natural way illustrated
by Figure~\ref{fig:CPLstar}.

\begin{figure}[h!]
\begin{center}
\framebox{
\parbox[t][1.2cm]{10cm}{

\begin{minipage}{\linewidth}
\begin{minipage}[t]{0.4\linewidth}
\begin{prooftree}
\AxiomC{$\mu^\star(\model{\bone}) = 0$}
\RightLabel{$R^\Era_{\mu^\star}$}
\UnaryInfC{$\vdash \bone \Era \fone$}
\end{prooftree}
\end{minipage}
\hfill
\begin{minipage}[t]{0.5\linewidth}
\begin{prooftree}
\AxiomC{$\vdash \bone \Era \fone$}
\AxiomC{$\mu^\star(\model{\btwo}) \ge q$}
\RightLabel{$R^\Era_\BOX$}
\BinaryInfC{$\vdash \bone \Era \BOX^q_{\mu^\star} \fone$}
\end{prooftree}
\end{minipage}
\end{minipage}
}}
\caption{Examples of Rules in $\mathbf{LK}_{\CPLc^\star}$}\label{fig:CPLstar}
\end{center}
\end{figure}

%%%% SECTION
%\newpage
% !TEX root = some_remarks.tex

%%%% MEASURING FORMULAS of CPL_0
\section{Measuring Formulas of $\CPLc$}\label{sec:measuring}

In~\cite{ICTCS},
the validity of counting formulas is decided
accessing an oracle for $\shSat$,
i.e.~counting the satisfying models of Boolean formulas.
Here, we provide an effective procedure to measure
formulas of $\CPLc$,
\emph{without appealing for an external source},
so somehow making explicit the task which, in the proof system $\mathbf{LK}_{\CPLc}$ is done by the oracle.
In this way, there is no need for so-called external hypotheses and a purely syntactical (though labelled) calculus can  be defined.
In our opinion, this result also makes the comparison with
other probability logics, such as~\cite{FHM},
more clear and help clarifying the nature of counting quantifiers.
Furthermore, we hope this is also the first step to shed new
lights on the study of the complexity of deciding
formulas of $\CPLc$.

Our proof is based on some crucial steps. 
Given a counting formula,
we start by considering its (inner) %non-nested
quantified formulas.
To calculate their measure, we pass through a special form,
the measure of which can be computed in a straightforward
way (Lemma~\ref{lemma:measure}),
and prove that formulas of $\CPLc$
\emph{without quantifiers} can be converted
into such measurable form (Lemma~\ref{lemma:conv}).
Notably, this procedure is effective, but not necessarily
``feasible'' as requiring argument formulas to be in disjunctive normal form (DNF, for short).
Finally, nested quantifications can be taken 
into account.
It is easy to see that by measuring argument formulas,
we can simply substitute the corresponding
(non-nested) counting-quantified expressions is either
valid or invalid (see~\cite[p. 7]{ICTCS}).

%%%%%% PRELIMINARIES
\subsection{Preliminaries}
On the logical side, 
we enrich the language of $\CPLc$
by the two standard symbols $\top$ and $\bot$,
to be interpreted as predictable, i.e.~$\model{\top}=\twoOm$ and $\model{\bot}=\emptyset$.
\begin{notation}
We use $L_1,L_2,\dots$ to denote literals,
that is either a counting variable or its negation.
Given a literal $L_i$,
$$
\overline{L_i} = \begin{cases}
\neg \mathbf{k} \ \ \ &\text{if }  L_i = \mathbf{k} \\
\mathbf{k} \ \ \ &\text{otherwise.}
\end{cases}
$$
\end{notation}

%%%%%% POLITE FORM
\subsection{Polite Normal Form}
For simplicity, before defining \emph{measurable normal form},
we introduce the auxiliary, polite form.
%
%Let us start by introducing notational convention.
%\begin{notation}\label{notation1}
%A literal $L_1,L_2,\dots$ is either an atomic formula or its negation.
%Given a literal $L_i$, 
%$$
%\overline{L_i} = \begin{cases}
%\neg \mathbf{k} \ \ \ &\text{if } L_i=\mathbf{k} \\
%\mathbf{k} \ \ \ &\text{otherwise.}
%\end{cases}
%$$
%As common,  let $\bot$ be a shorthand for $L_j\wedge \overline{L_j}$ and $\top$ for $L_j\vee \overline{L_j}$.
%\end{notation}
%
%
%
%
\noindent
 Intuitively, a conjunction of literals is in polite form if (is
in $\{\bot,\top\}$ or) 
each variable occurs in it at most once,
i.e.~repetitions are removed and it is not possible for
any literal to appear in the conjunction both in atomic
  $L_i$, and in negated form, $\overline{L_i}$.
 Observe that the measure
of this conjunction can be easily computed.
   Trivially, when the formula is $\top$, 
   its measure is 1 and when is
   $\bot$, its measure is 0.
   Otherwise, since literals correspond
 to mutually \emph{independent} events
 (each of measure $\frac{1}{2}$),
 the measure of the polite conjunction of $n$
 formulas is simply $\frac{1}{2^n}$.
  On the other hand, a formula in DNF
is in \emph{disjunctive polite form} when it is either
$\top$ or $\bot$ or 
each of its disjunct is in polite form, without
being in $\{\bot,\top\}$.
Formally,

%%%% Definition
%%%% POLITE FORM
\begin{defn}[Polite Form]\label{def:PNF}
A formula of $\CPLc$, which is a conjunction of literals
$C=\bigwedge_{i\in\{1,\dots,n\}}L_i$,
is in \emph{conjunctive polite form} (CPF,
for short) if either $C\in\{\bot,\top\}$
or for any $k\neq k'\in\{1,\dots,n\}$,
$L_k\neq L_{k'}$ and
$L_k \neq \overline{L_{k'}}$.
A formula of $\CPLc$ in DNF
$D=\bigvee_{j\in\{1,\dots,m\}} C_j$,
is in \emph{disjunctive polite form} (DPF,
for short) if either $D\in\{\bot,\top\}$
or for each $k\in\{1,\dots,m\}$,
$C_k$ is in CPF and $C_k\not\in\{\bot,\top\}$.
\end{defn}

Then, we prove that every formula of $\CPLc$
(without quantifiers)
can be converted in DPF.
In particular, as for standard $\PL$,
we know that for any counting formula 
\emph{without quantifiers},
there is an equivalent formula of $\CPLc$ in DNF.
Then, by Lemma~\ref{lemma:DPF} below, 
it is established that
for every DNF-formula,
there is an equivalent formula of $\CPLc$
in DPF.\footnote{Actually, the proof of Lemma~\ref{lemma:DPF},
is so defined that the DPF-formula $D^*$ does not contain
repetitions of disjuncts.}

%%% LEMMA
%%% lemma:DPF
\begin{lemma}\label{lemma:DPF}
Given a formula of $\CPLc$ in DNF $D$,
there is a $D^*$ such that $D^*$ is in DPF and
$D\equiv D^*$.
\end{lemma}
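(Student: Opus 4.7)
The plan is to perform two independent clean-up passes on the DNF formula $D = \bigvee_{j=1}^{m} C_j$: first normalise each individual conjunction $C_j$ so that it becomes either a CPF formula or the constant $\bot$; then normalise the top-level disjunction by discarding the $\bot$-disjuncts. Both passes preserve semantic equivalence by elementary properties of set-theoretic intersection, union, and complementation, which is exactly how the connectives are interpreted in Definition~\ref{def:semantics}.

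\textbf{Step 1 (politising each conjunction).} Fix a disjunct $C_j = \bigwedge_{i=1}^{n_j} L_i^j$. Scan its literals and group them by the variable $\mathbf{k}$ they mention. Two cases arise. If some variable $\mathbf{k}$ appears in $C_j$ both positively (as $\mathbf{k}$) and negatively (as $\neg \mathbf{k}$), then by Definition~\ref{def:semantics} we have $\model{\mathbf{k}} \cap \model{\neg \mathbf{k}} = Cyl(k) \cap (\twoOm - Cyl(k)) = \emptyset$, so $\model{C_j} = \emptyset$; define $C_j^\star := \bot$. Otherwise, every variable appears with one fixed polarity; let $C_j^\star$ be the conjunction obtained by keeping exactly one occurrence of each literal. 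Using the set-theoretic idempotence $X \cap X = X$, the interpretation is unchanged, so $C_j \equiv C_j^\star$, and by construction $C_j^\star$ is in CPF and distinct from $\top$ and $\bot$ (the underlying conjunction is non-empty and has no complementary pair).

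\textbf{Step 2 (cleaning the disjunction).} Let $J := \{\, j \in \{1,\dots,m\} \mid C_j^\star \neq \bot\,\}$. If $J = \emptyset$, set $D^\star := \bot$; this is trivially in DPF. Otherwise set $D^\star := \bigvee_{j \in J} C_j^\star$. Using $X \cup \emptyset = X$ repeatedly one obtains
\[
\model{D} = \bigcup_{j=1}^{m} \model{C_j} = \bigcup_{j=1}^{m} \model{C_j^\star} = \bigcup_{j \in J} \model{C_j^\star} = \model{D^\star},
\]
and by Step 1 each $C_j^\star$ with $j \in J$ is in CPF and lies outside $\{\top,\bot\}$, so $D^\star$ meets Definition~\ref{def:PNF}. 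If one further wishes the disjuncts of $D^\star$ to be pairwise distinct (cf.\ the footnote following the statement), a final pass may delete repetitions, again by $X \cup X = X$.

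\textbf{Where I expect difficulty.} The argument is essentially bookkeeping rather than a deep construction; the only place that requires care is the interplay with the base cases $\top$ and $\bot$ in Definition~\ref{def:PNF}. One has to verify that Step~1 never produces a $\top$-disjunct (which is why we insist disjuncts of the input $D$ are non-empty conjunctions of literals, the standard convention for DNF), and that Step~2 correctly emits the constant $\bot$ when every original disjunct collapses. Once these edge cases are handled, the semantic equivalence $\model{D} = \model{D^\star}$ is immediate from the Boolean-algebraic reading of $\wedge$ and $\vee$ inside $\sigma(\mathscr{C})$.
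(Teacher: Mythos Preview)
Your proof is correct and follows essentially the same two-pass strategy as the paper: first normalise each conjunct (remove duplicate literals, collapse to $\bot$ on a complementary pair), then clean the disjunction (drop $\bot$-disjuncts, optionally remove repeated clauses). The only cosmetic difference is that the paper's version also tracks the constants $\top$ and $\bot$ as possible conjuncts or ``literals'' inside the input DNF and therefore includes an extra case that outputs $D^{*}=\top$ when some $C_i^{*}=\top$; you sidestep this by adopting the standard convention that DNF clauses are non-empty conjunctions of genuine literals, which is perfectly fine and arguably cleaner.
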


\begin{proof}
Let $D=\bigvee_{i\in\{1,\dots,n\}}C_i$ be in DNF.
For any $C_i = \bigwedge_{j\in\{1,\dots,m\}}L_j$,
with $i\in\{1,\dots,n\}$,
we define $C_i^*$ applying the transformations
below:
\begin{itemize}
\itemsep0em
\item[$i.$] if $C_i=\top$, then $C_i^*=\top$.
\item[$ii.$] otherwise, consider each $j\in\{1,\dots,m\}$,
starting from $j=1$:
\begin{itemize}
\itemsep0em
\item[$a.$] if $L_j=\bot$, then $C^*_i=\bot$.
\item[$b.$] if $L_j=\top$, then $L_j$ is removed
and $j+1$ is considered.
\item[$c.$] if $L_j\not\in\{\bot,\top\}$, 
we consider each pedex $k\neq j\in\{1,\dots, m\}$, starting from the first possible one:
\begin{itemize}
\itemsep0em
\item if $L_j=L_k$, then $L_k$ is removed and the subsequent pedex (different from $j$ and $k$) is considered.
\item if $L_j=\overline{L_k}$, then $C_i^*=\bot$.%\footnote{Actually, due to our notational convention (Notation~\ref{notation1}) this case should already be considered in $a.$}
\item otherwise, $L_j$ is left unchanged and $k+1$
is considered.
\end{itemize}
\end{itemize}
\end{itemize}
It is clear that $C_i\equiv C_i^*$.
Now we consider $D'=\bigvee_{i\in\{1,\dots,n'\}}C_i^*$
and define $D^*$ applying the following transformations:
\begin{itemize}
\itemsep0em
\item[$i.$]
if $C_i^*=\bot$ for any $i\in\{1,\dots,n'\}$,
then $D^*=\bot$.

\item[$ii.$] otherwise, we consider each $i\in\{1,\dots,n'\}$,
starting from $i=1$:
\begin{itemize}
\itemsep0em
\item[$a.$] if $C_i=\top$, then $D^* = \top$.

\item[$b.$] if $C_i=\bot$, then $C_i$ is removed and
$i+1$ is considered.

\item[$c.$] if $C_i\not\in\{\top,\bot\}$, we consider
each pedex starting from the first $k'\neq i\in\{1,\dots,n'\}$:

\begin{itemize}
\itemsep0em
\item if $C_i$ and $C_{k'}$
contain exactly the same literals,
then $C_{k'}$ is removed and the subsequent pedex 
($\neq k',i$) is considered.

\item Otherwise, $C_i$ is (maybe temporarily) left unchanged
and the subsequent pedex ($\neq k',i$) is considered.
\end{itemize}
\end{itemize}
\end{itemize}
Again, it is clear that $D\equiv D^*$.
\end{proof}
\noindent
Observe that for any formula $\fone$
in DPF,
either $\fone\in\{\bot,\top\}$
or no instance of $\bot,\top$ occurs in it.
Furthermore, as anticipated,
it is easy to measure the probability of
a formula in CPF.

%%% PROPOSITION
\begin{prop}\label{prop:CPF}
Given a formula $C$ in CPF:
\begin{itemize}
\itemsep0em
\item[$i.$] if $C=\top$, then $\mu_{\mathscr{C}}(\model{C}) = 1$.

\item[$ii.$] if $C=\bot$,
then $\mu_{\mathscr{C}}(\model{C})=0$.

\item[$iii.$] Otherwise, $C=\bigwedge_{i\in\{1,\dots,n\}}
L_i$, and 
$\mu_{\mathscr{C}}(\model{C})= \frac{1}{2^n}$.
\end{itemize}
\end{prop}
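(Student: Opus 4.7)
The proof is a straightforward case analysis on the structure of $C$, exploiting the definition of CPF together with the basic properties of the cylinder measure $\mu_{\mathscr{C}}$ on $\twoOm$.

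For the first two cases, I would simply invoke the semantic clauses for $\top$ and $\bot$: by definition $\model{\top} = \twoOm$ and $\model{\bot} = \emptyset$, whence Kolmogorov's axioms immediately give $\mu_{\mathscr{C}}(\twoOm)=1$ and $\mu_{\mathscr{C}}(\emptyset)=0$. These cases require no argument beyond pointing at Definition~\ref{def:semantics} and the properties of probability measures already recalled in the preliminaries.

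For the third case, let $C = \bigwedge_{i=1}^n L_i$ with $C \notin \{\bot,\top\}$. The plan is to first observe that every literal has measure $\frac{1}{2}$: if $L_i = \mathbf{k}$, then $\model{L_i} = Cyl(k)$, which has measure $\frac{1}{2}$; if $L_i = \neg \mathbf{k}$, then $\model{L_i} = \twoOm \setminus Cyl(k)$, which likewise has measure $1 - \frac{1}{2} = \frac{1}{2}$. Next, the CPF condition is crucial: since no two literals $L_k, L_{k'}$ refer to the same propositional variable (neither $L_k = L_{k'}$ nor $L_k = \overline{L_{k'}}$ is allowed), the literals $L_1, \ldots, L_n$ mention $n$ \emph{pairwise distinct} indices $k_1, \ldots, k_n \in \Nat$. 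Each $\model{L_i}$ is therefore a cylinder (or complement of a cylinder) depending only on coordinate $k_i$, and such events are mutually stochastically independent under $\mu_{\mathscr{C}}$ — this is exactly the product-measure property of the fair-coin cylinder measure recorded in Definition~\ref{df:measureCyl}. By iterated independence,
\[
\mu_{\mathscr{C}}\bigl(\model{C}\bigr) \;=\; \mu_{\mathscr{C}}\Bigl(\bigcap_{i=1}^{n} \model{L_i}\Bigr) \;=\; \prod_{i=1}^{n} \mu_{\mathscr{C}}(\model{L_i}) \;=\; \Bigl(\tfrac{1}{2}\Bigr)^n \;=\; \frac{1}{2^n}.
\]

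The only step that is not entirely immediate is the independence claim, and that is really where the polite-form hypothesis pulls its weight: without the guarantee that no variable is repeated, one could have, e.g., $L_1 = L_2$, making $\model{L_1} \cap \model{L_2} = \model{L_1}$ of measure $\frac{1}{2}$ rather than $\frac{1}{4}$, or $L_2 = \overline{L_1}$, collapsing the intersection to $\emptyset$. I would therefore make the argument explicit by noting that a thin cylinder on the coordinates $\{k_1, \ldots, k_n\}$ underlies $\model{C}$, and invoke Definition~\ref{df:measureCyl} with $p_\zero = p_\one = \tfrac{1}{2}$ to compute its measure directly as $\prod_{i=1}^n \tfrac{1}{2}$. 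No routine algebra beyond that is needed.
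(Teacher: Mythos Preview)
Your argument is correct and follows essentially the same route as the paper: the trivial cases $\top$ and $\bot$ are handled by the semantic clauses plus Kolmogorov's axioms, and case $iii.$ uses the CPF hypothesis to guarantee that the literals involve pairwise distinct indices, hence are interpreted as mutually independent events of measure $\tfrac{1}{2}$, yielding $\tfrac{1}{2^n}$ for the intersection. Your write-up is somewhat more explicit than the paper's (which simply appeals to Definition~\ref{def:PNF} and ``basic measure theory''), but the underlying idea is identical.
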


\begin{proof}
Case $i.$ and $ii.$ are trivial consequences of
Definition~\ref{def:semantics} and basic
measure theory:
$\mu_{\mathscr{C}}(\model{\top})= \mu_\mathscr{C}(\twoOm)
=1$
and 
$\mu_{\mathscr{C}}(\model{\top})= \mu_\mathscr{C}(\emptyset)
=0$.
Case $iii.$ relies on Definition~\ref{def:PNF},
that is $C$ does not contain $\bot$ or $\top$ (or contradictions) 
or repetitions.
Thus, by semantic definition, its literals have to be
interpreted as \emph{independent} events the measure
of which is known, i.e.~$\model{L_i}$'s for $i\in\{1,\dots,n\}$ are
\emph{independent} events.
So, as seen, for basic measure theory,
$\mu_{\mathscr{C}}\big(\big\llbracket\bigwedge_{i\in\{1,\dots,n\}}L_i\big\rrbracket\big)
= \mu_{\mathscr{C}}\big(\bigcap_{i\in\{1,\dots,n\}} \model{L_i}\big)
= \frac{1}{2^n}$.
\end{proof}

\subsection{Measurable Normal Form}

Now, we introduce a special form, such that formulas
in this form can be 
 ``measured'' in a straightforward way.
 We start by considering the (logical) notion of 
 \emph{contradictory pair}.
 Two formulas in CPF are mutually contradictory
 if their conjunction is an invalid formula.%\footnote{Following~\cite{ICTCS}, we say that a formula of $\CPLc$ $\fone$, is valid when $\llbracket \fone\rrbracket= 2^\mathbb{N}$ and invalid when $\llbracket \fone \rrbracket = \emptyset$.}

%%% CONTRADICTORY PAIR
\begin{defn}[Contradictory Pair]
Two formulas of $\CPLc$ in CPF, $C_i=\bigwedge_{j\in\{1,\dots,n\}} L_j$
and 
$C_i'=\bigwedge_{k\in\{1,\dots,m\}}L_k$
are said to be \emph{mutually contradictory}
when there exist $j\in\{1,\dots,n\},k\in\{1,\dots,m\}$
such that $L_j=\overline{L_k}$ (or $L_k=\overline{L_j}$).
\end{defn}
\noindent
Clearly, contradictory formulas are interpreted
as \emph{disjoint} events
(and, as seen, the measure of the union
of two disjoint events is the sum of the
measure of each event).
So by Definition~\ref{def:semantics}
plus basic probability theory,
the measure of the disjunction of two contradictory 
formulas ($\not\in\{\bot,\top\}$) in CPF is the sum of the measure of each disjunct
(which, being themselves in CPF,
are easily measurable as well by Proposition~\ref{prop:CPF}).
By generalizing this intuition
we obtain the definition below.

%%% MEASURABLE NORMAL FORM
\begin{defn}[Measurable Normal Form]\label{def:MNF}
A formula of $\CPLc$ $\fone=\bigvee_{i\in\{1,\dots,n\}}
C_i$ is in \emph{measurable normal form}
(MNF, for short) 
if either $\fone\in \{\bot, \top\}$ or
$\fone$ is in DPF and for each
$j\neq k\in\{1,\dots,n\}$,
$C_j$ and $C_k$ are mutually contradictory.
\end{defn}
\noindent
Observe that, when a formula is in MNF,
as disjuncts are mutually contradictory,  
no disjunct is repeated in it and no
disjunct can be a sub-formula of another.\footnote{Further details on the notion of sub-formula
are presented in Section~\ref{sec:convMNF}.}

As seen, by Definition~\ref{def:MNF}
a formula of MNF contains disjunct corresponding 
to mutually disjoint events, 
so Lemma~\ref{lemma:measure} naturally follows.

%%% LEMMA
%%% lemma:measure
\begin{lemma}\label{lemma:measure}
Given a formula of $\CPLc$ in MNF
$\fone=\bigvee_{i\in\{1,\dots,n\}}C_i$:
\begin{itemize}
\itemsep0em
\item[i.] if $\fone=\bot$, then $\mu_{\mathscr{C}}(\model{\fone})=0$,
\item[ii.] if $\fone=\top$, then $\mu_{\mathscr{C}}(\model{\fone})=1$,
\item[iii.] otherwise, $\mu_{\mathscr{C}}(\model{\fone})=
\sum_{i\in\{1,\dots,n\}} \mu_{\mathscr{C}}(\model{C_i}).$
\end{itemize}
\end{lemma}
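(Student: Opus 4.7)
The plan is to dispose of cases (i) and (ii) immediately from the base clauses of Definition~\ref{def:semantics}, and then reduce case (iii) to finite additivity of $\mu_{\mathscr{C}}$ by establishing that the interpretations of the disjuncts are pairwise disjoint.

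More precisely, for case (i), $\fone = \bot$ forces $\model{\fone} = \emptyset$ by the semantics, hence $\mu_{\mathscr{C}}(\model{\fone}) = 0$ by Kolmogorov's axioms as recalled in Section~\ref{sec:CPL}. For case (ii), $\fone = \top$ forces $\model{\fone} = \twoOm$, hence $\mu_{\mathscr{C}}(\model{\fone}) = 1$. Both are one-line arguments and require no real work.

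For case (iii), I would first unfold the semantics of $\vee$ to rewrite $\model{\fone} = \bigcup_{i=1}^{n} \model{C_i}$. The key observation is a disjointness claim: whenever $C_j$ and $C_k$ are mutually contradictory CPF formulas, $\model{C_j} \cap \model{C_k} = \emptyset$. This holds because, by the definition of contradictory pair, some literal $L$ appearing in one of the two conjunctions has its negation $\overline{L}$ appearing in the other; any valuation $f \in \model{C_j}$ must satisfy every literal of $C_j$, and in particular $L$, so it falsifies $\overline{L}$ and cannot lie in $\model{C_k}$. Since $\fone$ is in MNF, Definition~\ref{def:MNF} guarantees that this disjointness holds for \emph{every} pair of distinct disjuncts, so the sets $\model{C_1}, \dots, \model{C_n}$ are pairwise disjoint elements of $\sigma(\mathscr{C})$.

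Once pairwise disjointness is in place, the result follows by invoking finite additivity of $\mu_{\mathscr{C}}$, which is a direct consequence of Kolmogorov's third axiom applied to the sequence $\model{C_1}, \dots, \model{C_n}, \emptyset, \emptyset, \dots$ (or equivalently by induction on $n$ using the two-event additivity law recalled in Section~\ref{sec:CPL}). There is no substantial obstacle in the argument: the content of the lemma is essentially that the syntactic ``contradictory pair'' condition baked into the definition of MNF was precisely designed to translate into set-theoretic disjointness of interpretations, and finite additivity then does the rest. The genuinely useful content only emerges when this lemma is combined with Proposition~\ref{prop:CPF}, which allows one to actually evaluate each summand $\mu_{\mathscr{C}}(\model{C_i})$ from the length of $C_i$.
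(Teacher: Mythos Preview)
Your proposal is correct and matches the paper's own proof almost exactly: cases (i) and (ii) are dispatched by the semantic clauses plus basic measure theory, and case (iii) is handled by observing that the contradictory-pair condition in Definition~\ref{def:MNF} forces $\model{C_j}\cap\model{C_k}=\emptyset$ for all $j\neq k$, after which finite additivity of $\mu_{\mathscr{C}}$ yields the sum. If anything, you spell out the disjointness argument and the passage to finite additivity in slightly more detail than the paper does.
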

\begin{proof}[Proof Sketch]
Case $i.$ and $ii.$ hold by Definition~\ref{def:semantics}
and basic measure theory:
$\mu_{\mathscr{C}}(\model{\bot})=\mu_{\mathscr{C}}(\emptyset)=0$
and 
$\mu_{\mathscr{C}}(\model{\top})=
\mu_{\mathscr{C}}(\twoOm)=1$.
Case $iii.$ is proved relying on the Definition~\ref{def:MNF}:
for any $j\neq k\in\{1,\dots,n\}$, $(C_j,C_k)$ is a contradictory
pair.
Clearly, $\model{C_j}\cap \model{C_k}=\emptyset$
for any $j$ and $k$.
So,
$\mu_{\mathscr{C}}\big(\big\llbracket \bigvee_{i\in\{1,\dots,n\}}
C_i\big\rrbracket\big) =
\mu_{\mathscr{C}}\big(\bigcup_{i\in\{1,\dots,n\}}\model{C_i}\big)
=\sum_{i\in\{1,\dots,n\}}\mu_{\mathscr{C}}(\model{C_i}).
$ 
\end{proof}
\noindent
As said above, each disjunct is in CPF,
so its measure is easily computable as well.
%

%%% COROLLARY
\begin{cor}\label{cor1}
Given a formula of $\CPLc$
in MNF 
$\fone=\underbrace{\bigwedge_{i\in\{1,\dots,m_1\}}L_i 
\vee \dots  \vee \bigwedge_{j\in\{1,\dots, m_{n}\}}L_j}_{n \ times}:$
\begin{itemize}
\itemsep0em
\item if $F=\top$, then $\mu_{\mathscr{C}}(\model{\fone})=1$,

\item if $F=\bot$, then $\mu_{\mathscr{C}}(\model{\fone})=0$,

\item otherwise, $\mu_{\mathscr{C}}(\model{\fone})
= \underbrace{1/2^{m_1} + \dots + 1/2^{m_n}}_{n \ times}$.
\end{itemize}
\end{cor}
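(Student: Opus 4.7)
The plan is to derive Corollary~\ref{cor1} as an immediate composition of Lemma~\ref{lemma:measure} and Proposition~\ref{prop:CPF}, since together they cover the measure of the outer disjunction and of each inner conjunction respectively. The three bullets of the corollary line up one-for-one with the three cases of Lemma~\ref{lemma:measure}, so I would proceed by a case split on the shape of $\fone$ following exactly that trichotomy.

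First, if $\fone = \top$ or $\fone = \bot$, I would simply invoke cases $i.$ and $ii.$ of Lemma~\ref{lemma:measure}, which give $\mu_{\mathscr{C}}(\model{\fone}) = 0$ and $\mu_{\mathscr{C}}(\model{\fone}) = 1$ respectively. These cases require no further argument. For the remaining case, by Definition~\ref{def:MNF}, $\fone$ is in DPF, hence $\fone \notin \{\bot, \top\}$ and each of its disjuncts $C_k$ is itself in CPF and distinct from $\bot$ and $\top$. Lemma~\ref{lemma:measure} case $iii.$ then yields
\begin{equation*}
\mu_{\mathscr{C}}(\model{\fone}) = \sum_{k \in \{1,\dots,n\}} \mu_{\mathscr{C}}(\model{C_k}).
\end{equation*}

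Second, to rewrite this sum in the desired explicit form, I would feed each disjunct into Proposition~\ref{prop:CPF}. Since $C_k = \bigwedge_{i \in \{1,\dots,m_k\}} L_i$ is in CPF and is neither $\top$ nor $\bot$, case $iii.$ of that proposition applies and gives $\mu_{\mathscr{C}}(\model{C_k}) = 1/2^{m_k}$. Substituting back produces $\mu_{\mathscr{C}}(\model{\fone}) = 1/2^{m_1} + \dots + 1/2^{m_n}$, which is exactly the third bullet.

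Since this is essentially a two-line composition of previously established facts, there is no real obstacle. The only point requiring a small amount of care is the bookkeeping needed to justify applying case $iii.$ (rather than $i.$ or $ii.$) of Proposition~\ref{prop:CPF} to each disjunct: this relies on the clause in Definition~\ref{def:PNF}/Definition~\ref{def:MNF} guaranteeing that when $\fone \notin \{\bot, \top\}$ no disjunct in the DPF representation belongs to $\{\bot, \top\}$. I would therefore state this observation explicitly at the start of case $iii.$ before invoking the proposition.
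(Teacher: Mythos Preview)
Your proposal is correct and matches the paper's own proof, which simply reads ``By putting Proposition~\ref{prop:CPF} and Lemma~\ref{lemma:measure} together.'' You have merely unpacked this one-line composition into its evident case split, including the small bookkeeping observation about disjuncts not lying in $\{\bot,\top\}$, which is exactly the intended argument.
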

\begin{proof}
By putting Proposition~\ref{prop:CPF} and Lemma~\ref{lemma:measure} together.
\end{proof}
%
%
%

%%%%%% SUBSECTION
%%%%%% Conversion into MNF
\subsection{Conversion into MNF}\label{sec:convMNF}

To conclude our proof, we show that
each formula in DPF can actually
be ``converted'' into MNF.
To do so, we notice that two disjuncts
can be mutually related in three ways only:
(1)
one is a sub-formula of the other,
in this case,
the former is simply removed;
(2)
they are a contradictory pair,
so are already in the desired form and
another pair can be considered;
(3)
one of the two disjuncts, say
$C_i$, contains a literal $L_k$,
such that neither $L_k$ or $\overline{L_k}$
occurs in the other disjunct, say $C_j$,
in this case $C_j$ is substituted by
$C_j' = C_j\wedge L_k$ and
$C_j'' = C_j\wedge \overline{L_k}$.
Notice that a formula in CPF,
say $C_i=\bigwedge_{k\in\{1,\dots,n\}} L_k$
is said to be a \emph{sub-formula}
of another formula in CPF 
$C_j=\bigwedge_{k'\in\{1,\dots,m\}}L_{k'}$
when (they are not a contradictory pair and)
 for each $k'\in\{1,\dots,m\}$,
there is a $k\in\{1,\dots,n\}$ such that 
$L_k=L_{k'}$.
For example, the formula
$\mathbf{1} \wedge \mathbf{2}
\wedge \mathbf{3}$
is a sub-formula of $\mathbf{1}\wedge \mathbf{2}$.

%%%%
\begin{lemma}\label{lemma:conv}
For each formula of $\CPLc$ in DPF
$\fone$, there is a formula in MNF
$\fone^{**}$ such that $\fone\equiv \fone^{**}$.
\end{lemma}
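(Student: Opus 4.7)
The plan is to prove the lemma via a direct minterm expansion. Given $\fone = \bigvee_{i \in \{1, \dots, n\}} C_i$ in DPF with $\fone \not\in \{\bot, \top\}$ (the excluded cases are trivial, since $\bot$ and $\top$ are already in MNF by Definition~\ref{def:MNF}), let $V(\fone)$ denote the finite set of indices $\mathbf{k}$ occurring positively or negatively in $\fone$. For each disjunct $C_i$ and each $\mathbf{k} \in V(\fone)$ not occurring in $C_i$, I iteratively apply the identity $C_i \equiv (C_i \wedge \mathbf{k}) \vee (C_i \wedge \neg \mathbf{k})$---precisely case~(3) of the informal sketch preceding the statement---until every resulting conjunct mentions each index of $V(\fone)$ exactly once; then I remove duplicate disjuncts.

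The key steps, in order, are as follows. First, I verify that a single splitting step preserves both CPF-ness and semantic value: since by hypothesis $\mathbf{k}$ does not already occur in $C_i$, adjoining the literal $\mathbf{k}$ (or $\neg \mathbf{k}$) cannot create a repetition or a contradictory pair of literals within the conjunction, so both $C_i \wedge \mathbf{k}$ and $C_i \wedge \neg \mathbf{k}$ remain in CPF; the equivalence $\model{C_i} = \model{C_i \wedge \mathbf{k}} \cup \model{C_i \wedge \neg \mathbf{k}}$ is immediate from Definition~\ref{def:semantics} together with $Cyl(k) \cup (\twoOm \setminus Cyl(k)) = \twoOm$. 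Second, since each $C_i$ requires at most $|V(\fone)|$ splits and there are finitely many disjuncts, the procedure terminates, producing an intermediate $\fone' \equiv \fone$ of the shape $\bigvee_j M_j$ where every $M_j$ is a ``full'' CPF conjunction---a minterm---involving each variable of $V(\fone)$ exactly once. Third, removing duplicate disjuncts yields $\fone^{**}$; idempotence of set union preserves the interpretation, so $\fone^{**} \equiv \fone$. Finally, any two \emph{distinct} minterms over $V(\fone)$ must disagree in polarity on at least one variable, hence forming a contradictory pair; thus $\fone^{**}$ satisfies the conditions of Definition~\ref{def:MNF} and is in MNF.

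The main obstacle---and the reason I prefer this canonical-expansion strategy over directly iterating the three cases~(1)--(3) of the informal preamble---is guaranteeing termination and progress when cases interact: applying case~(3) to one pair can create new disjuncts that reintroduce case~(1) or case~(3) relationships with other disjuncts, and it is not immediately obvious how to define a well-founded measure that decreases under both sub-formula removal and literal splitting. The minterm-expansion approach sidesteps this: the sub-formula case~(1) is absorbed by the duplicate-removal step (if $C_i$ is a sub-formula of $C_j$, then $\model{C_i} \subseteq \model{C_j}$, so the minterms produced from $C_i$ are already among those produced from $C_j$, and deduplication discards them), while termination is manifestly controlled by $|V(\fone)|$. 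What remains is a routine induction on the number of splits showing that the whole procedure preserves $\model{\fone}$.
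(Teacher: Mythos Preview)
Your argument is correct and is, in fact, cleaner than the paper's. The paper proves the lemma by directly iterating the three cases~(1)--(3) spelled out just before the statement: it walks through pairs $(C_i,C_j)$, removes $C_i$ when it is a sub-formula of some $C_j$, skips pairs that are already contradictory, and otherwise splits one of the two conjuncts on a literal present in the other. As you anticipated, the paper does not supply a termination measure for this interleaving, leaving that point implicit. Your minterm-expansion route avoids the issue entirely: termination is governed by $|V(\fone)|$, the sub-formula case is absorbed by deduplication, and the MNF condition follows from the trivial fact that distinct minterms over the same variable set disagree on some literal. The price you pay is size---your $\fone^{**}$ is always a full disjunction of minterms, whereas the paper's procedure can in principle stop earlier with shorter conjuncts (cf.\ the second formula $\ftwo$ in the example following the lemma, which the paper leaves untouched but your method would expand)---but since the lemma asks only for existence and not for succinctness, this is immaterial.
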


\begin{proof}
Given a formula of $\CPLc$ in DPF
$\fone=\bigvee_{i\in\{1,\dots,n\}}C_i$,
we define a formula $\fone^{**}$
in MNF as follows.
\begin{itemize}
\itemsep0em
\item[$*$] if $\fone\in\{\bot,\top\}$, then $\fone^{**}=
\fone$.

\item[$*$] Otherwise, we consider each $i\in\{1,\dots,n\}$,
starting from $i=1$:
\begin{itemize}

\itemsep0em
\item[$i.$] if there is a $j\neq i\in\{1,\dots, n\}$,
such that $C_i$ is a sub-formula of $C_j$,
then $C_i$ is removed and $i+1$ is considered.

\item[$ii.$] otherwise, we consider each pedexes
$j\neq i\in\{1,\dots,n\}$ starting from the first possible $j$:
\begin{itemize}
\itemsep0em

\item[$a.$] if $C_i$ and $C_j$ are mutually contradictory,
then $j+1$ is considered.

\item[$b.$] otherwise, for $C_i=\bigwedge_{k\in\{1,\dots,l\}}
L_k$
and $C_j=\bigwedge_{k'\in\{1,\dots,l'\}}L_{k'}$,
we consider each $k\in\{1,\dots,l\}$ starting
from $k=1$.
\begin{itemize}
\itemsep0em
\item if there is a $k'\in\{1,\dots,l'\}$
such that $L_k=L_{k'}$, then $L_k$ is left unchanged
and $k+1$ is considered.

\item if there is no $k'\in\{1,\dots,l'\}$
such that $L_k=L_{k'}$, then
we replace $C_i$ with two formulas
$C_i'=C_i \wedge L_{k'}$ and
$C_i'' = C_i \wedge \overline{L_{k'}}$
and apply $a.$-$b.$ to both.
\end{itemize}
We consider each $k'\in\{1,\dots,l'\}$
starting from 1:
\begin{itemize}
\itemsep0em

\item if there is a $k\in\{1,\dots,l\}$ such that
$L_k=L_{k'}$,
then $L_{k'}$ is left unchanged and $k+1$ is
considered.

\item if there is no $k\in\{1,\dots,l\}$
such that $L_k=L_{k'}$, then
we replace $C_j$ with $C_j'=C_j\wedge
L_{k'}$ and 
$C_j'' = C_j\wedge \overline{L_{k'}}$ and 
apply $a.$-$b.$ to both.
\end{itemize}
Then, we consider $j+1$.
\end{itemize}
When $j+1=n$, $i+1$ is considered (until
also $i+1=n$).
\end{itemize}
\end{itemize}
\end{proof}
  So, for any formula of $\CPLc$ \emph{without quantifier}, by Lemma~\ref{lemma:conv}, we can  
``convert'' it into (DNF, then DPF, so) MNF
and this concludes our proof.
  Indeed, by Lemma~\ref{lemma:measure}, given such
 MNF-formula there is a procedure to measure it.
  As anticipated, this result can be extended to any formulas of
 $\CPLc$, as quantified expressions are simply substituted
 by either $\bot$ or $\top$ (depending, of course, 
 on the quantification involved and
 on the fact that the measure of the argument formula
 is greater or strictly smaller than the probability index).
Then, the procedure is repeated as long as required.

\begin{ex}
Let us consider the formula $\fone=\mathbf{1} \vee
(\mathbf{1} \wedge \mathbf{3})$.
Its MNF is $\fone'=\mathbf{1}$.
Then, $\mu_{\mathscr{C}}(\fone)= \frac{1}{2}$.
\\
As a second example, let $\ftwo=\mathbf{1}\vee
(\neg\mathbf{1}\wedge\mathbf{2})$.
This formula is already in MNF and 
$\mu_{\mathscr{C}}(\model{\ftwo})=\frac{1}{2}
+ \frac{1}{4}=\frac{3}{4}$.
\\
Finally, let $\fthree=\mathbf{1}\vee (
\neg \mathbf{1} \wedge \mathbf{2})
\vee (\neg \mathbf{2} \wedge \mathbf{3})$.
Its MNF is
$
\fthree' = (\mathbf{1} \wedge \mathbf{2} \wedge \mathbf{3})
\vee (\mathbf{1} \wedge \mathbf{2} \wedge \neg\mathbf{3})
\vee (\mathbf{1} \wedge \neg \mathbf{2} \wedge \mathbf{3})
\vee (\mathbf{1} \wedge \neg \mathbf{2} \wedge \neg \mathbf{3}) \vee 
(\neg \mathbf{1} \wedge \mathbf{2} \wedge \mathbf{3})
\vee (\neg \mathbf{1} \wedge \mathbf{2} \wedge
\neg \mathbf{3})
\vee (\neg \mathbf{1} \wedge \neg \mathbf{2} \wedge \mathbf{3}).
$
Thus, $\mu_{\mathscr{C}}(\model{\fthree})=\frac{7}{8}$.
\end{ex}

%%%%
%\newpage
%\input{QPL}

%\newpage
\bibliographystyle{plain}
\bibliography{bib}

\newpage
\appendix
\normalsize

% !TEX root = some_remarks.tex
\small
\section{Appendix}\label{app:1}

%\subsection{Further Details and Examples}

\begin{proof}[Proof of Example~\ref{ex1}]
The formula $\BOX^{1/4}(\mathbf{1} \wedge \mathbf{2})
\wedge \BDIA^{1/4} (\mathbf{1} \wedge \mathbf{2})$
is shown valid as follows:
\begin{align*}
\model{F_{ex}} &= 
\model{\BOX^{1/4}(\mathbf{1} \wedge \mathbf{2})}
\cap \model{\BDIA^{1/4}(\mathbf{1} \wedge \mathbf{2})} \\
&= \begin{cases}
\twoOm &\text{if } \mu_{\mathscr{C}}(\model{\mathbf{1}
\wedge \mathbf{2}}) \ge \frac{1}{4} \\
\emptyset &\text{otherwise}
\end{cases}
\cap 
\begin{cases}
\twoOm  &\text{if } \mu_{\mathscr{C}}(\model{\mathbf{1}
\wedge \mathbf{2}}) \leq \frac{1}{4} \\
\emptyset &\text{otherwise}
\end{cases} \\
&= \begin{cases}
\twoOm &\text{if } \mu_{\mathscr{C}}(\model{\mathbf{1}} 
\cap \model{\mathbf{2}}) \ge \frac{1}{4} \\
\emptyset &\text{otherwise}
\end{cases}
\cap
\begin{cases}
\twoOm &\text{if } \mu_{\mathscr{C}}(\model{\mathbf{1}}
\cap \model{\mathbf{2}}) \leq \frac{1}{4} \\
\emptyset &\text{otherwise}
\end{cases} \\
&= \begin{cases}
\twoOm &\text{if } \mu_{\mathscr{C}}(Cyl(1)
\cap Cyl(2)) \ge \frac{1}{4} \\
\emptyset &\text{otherwise}
\end{cases} 
\cap 
\begin{cases}
\twoOm &\text{if } \mu_{\mathscr{C}}(Cyl(1) \cap
Cyl(2)) \leq \frac{1}{4} \\
\emptyset &\text{otherwise}
\end{cases} \\
&= \twoOm \cap \twoOm \\
&= \twoOm.
\end{align*} 
The formula
is proved derivable in $\mathbf{LK}_{\CPLc}$:

\begin{prooftree}
\AxiomC{$\mathcal{D}$}
\AxiomC{$\mu(\model{\bvar_1 \wedge \bvar_2})
\ge \frac{1}{4}$}
\RightLabel{$R^\Era_\BOX$}
\BinaryInfC{$\vdash \top \Era \BOX^{1/4} (\mathbf{1} \wedge
\mathbf{2})$}
\AxiomC{$\mathcal{D}$}
\AxiomC{$\mu(\model{\bvar_1 \wedge \bvar_2})<\frac{1}{4}$}
\RightLabel{$R^\Era_\BDIA$}
\BinaryInfC{$\vdash\top \Era \BDIA^{1/4}(\mathbf{1} \wedge
\mathbf{2})$}
\RightLabel{$R^\Era_\wedge$}
\BinaryInfC{$\vdash \bvar_1 \wedge \bvar_2 \Era 
\mathbf{1} \wedge \mathbf{2}$}
\end{prooftree}
where $\mathcal{D}$ is defined as,
\begin{prooftree}
\AxiomC{$\bvar_1 \vDash \bvar_1$}
\RightLabel{$Ax1$}
\UnaryInfC{$\vdash \bvar_1 \Era \mathbf{1}$}
\RightLabel{$R^\Era_\cup$}
\UnaryInfC{$\vdash \bvar_1 \wedge \bvar_2 \Era \mathbf{1}$}
\AxiomC{$\bvar_2 \vDash \bvar_2$}
\RightLabel{$Ax1$}
\UnaryInfC{$\vdash \bvar_2 \Era \mathbf{2}$}
\RightLabel{$R_\cup^\Era$}
\UnaryInfC{$\vdash \bvar_1 \wedge \bvar_2 \Era
\mathbf{2}$}
\RightLabel{$R1^\Era_\wedge$}
\BinaryInfC{$\vdash \bvar_1 \wedge \bvar_2 \Era \mathbf{1}
\wedge \mathbf{2}$}
\end{prooftree}
\end{proof}

\begin{proof}[Proof of Example~\ref{ex2}]
The formula $\fone_{bias}$ of Example~\ref{ex2}
can be easily proved valid.
\begin{align*}
\model{\fone_{bias}} &=
\begin{cases}
\twoOm \ &\text{if } \mu_{\mathscr{C}}\big(\model{(\mathbf{1}
\wedge \mathbf{2}) \vee
(\mathbf{3} \wedge \mathbf{4})}\big) \ge \frac{1}{3} \\
\emptyset \ &\text{otherwise}
\end{cases} \\
&= \begin{cases}
\twoOm \ &\text{if } \mu_{\mathscr{C}}\big(\model{\mathbf{1} \wedge
\mathbf{2}} \cup
\model{\mathbf{3} \wedge \mathbf{4}}\big) \ge \frac{1}{3} \\
\emptyset \ &\text{otherwise} 
\end{cases} \\
&= \begin{cases}
\twoOm \ &\text{if } \mu_{\mathscr{C}}\big((\model{\mathbf{1}}
\cap \model{\mathbf{2}}) \cup
(\model{\mathbf{3}} \cap \model{\mathbf{4}})\big) \ge \frac{1}{3} \\
\emptyset \ &\text{otherwise}
\end{cases} \\
&=
\begin{cases}
\twoOm \ &\text{if } \mu_{\mathbf{C}}\big( (Cyl(1) \cap
Cyl(2)) \cup (Cyl(3) \cap Cyl(4)\big) \ge \frac{1}{3} \\
\emptyset \ &\text{otherwise}
\end{cases} \\
&= \begin{cases}
\twoOm \ &\text{if } \frac{1}{2} \ge \frac{1}{3} \\
\emptyset \ &\text{otherwise}
\end{cases}
\\
&= \twoOm.
\end{align*}
Let $\mathcal{D}$ be defined as in Example~\ref{ex1}.
Then, the derivation of $\fone_{bias}$ in $\mathbf{LK}_{\CPLc}$
is as follows:

\begin{prooftree}
\AxiomC{$\mathcal{D}$}
\RightLabel{$R^\Era_\vee$}
\UnaryInfC{$\vdash \bvar_1 \wedge \bvar_2 \Era (\mathbf{1}
\wedge \mathbf{2}) \vee
(\mathbf{3} \wedge \mathbf{4})$}
\AxiomC{$\bvar_3 \vDash \bvar_3$}
\RightLabel{$Ax1$}
\UnaryInfC{$\vdash \bvar_3 \Era \mathbf{3}$}
\RightLabel{$R1^\Era_\cup$}
\UnaryInfC{$\vdash \bvar_3 \wedge \bvar_4 \Era \mathbf{3}$}
\AxiomC{$\bvar_4 \vDash \bvar_4$}
\RightLabel{$Ax1$}
\UnaryInfC{$\vdash \bvar_4 \Era \mathbf{4}$}
\RightLabel{$R^\Era_\cup$}
\UnaryInfC{$\vdash \bvar_3 \wedge \bvar_4 \Era \mathbf{4}$}
\RightLabel{$R_\wedge^\Era$}
\BinaryInfC{$\vdash \bvar_3 \wedge \bvar_4 \Era 
\mathbf{3} \wedge \mathbf{4}$}
\RightLabel{$R^\Era_\vee$}
\UnaryInfC{$\vdash \bvar_3 \wedge \bvar_4 \Era
(\mathbf{1} \wedge \mathbf{2}) \vee
(\mathbf{3} \wedge \mathbf{4})$}
\RightLabel{$R^\Era_\cup$}
\BinaryInfC{$\vdash (\bvar_1\wedge \bvar_2) \vee
(\bvar_3 \wedge \bvar_4) \Era (\mathbf{1} \wedge \mathbf{2})
\vee (\mathbf{3} \wedge \mathbf{4})$}
\RightLabel{$R^\Era_\BOX$}
\UnaryInfC{$\vdash \top \Era \BOX^{1/3} (\mathbf{1} \wedge
\mathbf{2}) \vee (\mathbf{3} \wedge \mathbf{4})$}
\end{prooftree}
Indeed, $\mu\big(\model{(\bvar_1 \wedge \bvar_2) \vee
(\bvar_3 \wedge \bvar_4)}\big) = \frac{1}{2} \ge \frac{1}{3}$.
\end{proof}

\longv{

\subsection{A Digression on Non-Dyadic Distributions and $\MQPA$}

Although events
associated with non-dyadic distributions 
cannot be expressed in $\CPLc$ in a precise way,
when switching to the measure-quantified language $\MQPA$,
first presented in~\cite{CiE},
this formalization becomes possible.\footnote{Actually,
this fact is predictable, given the randomized arithmetization results~\cite[Theorem 3]{CiE}).}
As a concrete instance, let us consider the formalization 
of the experiment introduced in Example~\ref{ex:ndyad}.
To do so, let us briefly summarise the grammar and semantics
of such language, introduced in~\cite{CiE}.

\begin{defn}[Grammar and Semantics of $\MQPA$]\label{def:MQPASem}
Let $\mathscr{G}$ be a denumerable
set of \emph{ground variables},
whose elements are denoted by meta-variables
$x,y,\dots$.
The \emph{terms of $\MQPA$},
denoted by $t,s$ are defined as follows:
$$
t := x \midd \mathtt{0} \midd \mathtt{S}(t) \midd
t+t \midd t\times t.
$$
\emph{Formulas of $\MQPA$} are defined
by the grammar below:
$$
\fone := (\mathbf{t}) \midd (\mathbf{t} = \mathbf{t})
\midd \neg \fone \midd \fone * \fone \midd
(\exists x)\fone \midd (\forall x)\fone \midd
\BOX^{t/s} \fone \midd \DIA^{t/s}\fone.
$$
with $*\in \{\vee,\wedge\}$.

Given an environment $\xi:\mathscr{G} \mapsto \Nat$,
the interpretation $\model{t}_{\xi}$
of a term $t$ is defined as usual.
Given a formula $\fone$ and an environment
$\xi$, the \emph{interpretation of $\fone$ in $\xi$}
is the measurable set of sequences
$\model{\fone}_{\xi} \in \sigma(\mathscr{C})$
inductively defined as follows:

\begin{minipage}{\linewidth}
\begin{minipage}[t]{0.1\linewidth}
\begin{align*}
\model{\mathbf{t}}_{\xi} &:= Cyl\big(\model{t}_{\xi}\big) \\
\model{t=s}_{\xi} &:= 
\begin{cases}
\twoOm \ &\text{if } \model{t}_{\xi} = \model{s}_\xi \\
\emptyset \ &\text{otherwise}
\end{cases} \\
\model{\neg \ftwo} &:= \twoOm - 
\model{\ftwo}_\xi 
\end{align*}
\end{minipage}
\hfill
\begin{minipage}[t]{0.5\linewidth}
\begin{align*}
\model{\ftwo \vee \fthree}_{\xi} &:=
\model{\ftwo}_\xi \cup \model{\fthree}_\xi \\
\model{\ftwo \wedge \fthree}_\xi &:=
\model{\ftwo}_\xi \cap \model{\fthree}_\xi \\
\model{(\exists x)\ftwo}_\xi &:=
\bigcup_{i\in\Nat} \model{\ftwo}_{\xi\{x \leftarrow i\}} \\
\model{(\forall x)\ftwo}_\xi &:=
\bigcap_{i\in\Nat} \model{\ftwo}_{\xi\{x\leftarrow i\}}
\end{align*}
\end{minipage}
\end{minipage}

\begin{align*}
\model{\BOX^{t/s}\ftwo}_{\xi}
&:= \begin{cases}
\twoOm \ &\text{if } \model{s}_\xi > 0 \text{ and }
\mu_{\mathscr{C}}(\model{\ftwo}_\xi) \ge 
\frac{\model{t}_\xi}{\model{s}_\xi} \\
\emptyset \ &\text{otherwise}
\end{cases} \\
\model{\DIA^{t/s}\ftwo}_{\xi}
&:= \begin{cases}
\twoOm \ &\text{if } \model{s}_\xi = 0 \text{ and }
\mu_{\mathscr{C}}(\model{\ftwo}_\xi) < 
\frac{\model{t}_\xi}{\model{s}_\xi} \\
\emptyset \ &\text{otherwise.}
\end{cases} 
\end{align*}
\end{defn}

So, we can now formalize flipping of the biased coin in Example~\ref{ex:ndyad} above
in a precise way.

%\newpage
\begin{ex}[Biased Coin $\frac{1}{3}$, again]
Let our biased coin return \textsc{Head} only
$\frac{1}{3}$ of the time.
For readability, let us introduce the following abbreviations:
\begin{align*}
\mathtt{PFlip}(x) &:= (\mathbf{x} \wedge \mathbf{x+1}) \\
\mathtt{NFlip}(x) &:= (\neg \mathbf{x} \wedge \mathbf{x+1}).
\end{align*}
Then, we can formalize in $\MQPA$
an atomic flip of the biased coin
as,
$$
\ftwo_{ndy} = (\exists x) \big(\mathtt{PFlip}(2x+1) \wedge
(\forall_{y<x})(\mathtt{NFlip}(2y+1)\big)
$$
By Definition~\ref{def:MQPASem},
\begin{align*}
\model{\ftwo_{ndy}} &=
\bigcup_{i\in\Nat}\model{\mathtt{PFLIP}(2i+1)
\wedge (\forall_{y<i})(\mathtt{NFLIP}(2y+1))} \\
&= \bigcup_{i\in\Nat} \big(\model{\mathtt{PFLIP}(2i+1)} 
\cap \model{(\forall_{y< i}) (\mathtt{NFLIP}(2y+1))}\big) \\
&= \bigcup_{i\in\Nat}\Big(\model{\mathtt{PFLIP}(2i+1)}
%\cap \model{\mathbf{2i+2}}\big) 
\cap
\bigcap_{j<i} \model{\mathtt{NFLIP}(2j+1)}\Big) \\
%
%
%&= \bigcup_{i\in\Nat} \Big(\big(\model{\mathbf{2i+1}} \cap \model{\mathbf{2i+2}}\big) \cap \bigcap_{j<i \in \Nat} \big(\model{\neg(\mathbf{2j+1})} \cap \model{\mathbf{2j+2}}\big)\Big).
\end{align*}
So $\model{\ftwo_{ndy}}$ is 
the infinite union of events, $E_1, E_2, \dots, E_n, \dots$ each
corresponding to an alternation of
$\mathtt{PLFIP}(x)$
and $\mathtt{NFLIP}(x)$,
i.e.~$\model{\mathtt{PFLIP}(1)} \cup 
\big(\model{\mathtt{PFLIP}(3)} \cap \model{\mathtt{NFLIP}(1)}\big)
\cup \big(\model{\mathtt{PFLIP}(5)} \cap \model{\mathtt{NFLIP}(3)} \cap 
\model{\mathtt{NFLIP}(1)}\big) \cup \dots$.
Let $E_n$ be the event corresponding
$\model{\mathtt{PFLIP}(n+1)} \cap \model{\mathtt{NFLIP}((n-1)+1)} \cap \dots
\model{\mathtt{NFLIP}(1)}$.
Clearly, for any $n\ge1$, $E_n, E_{n+1}$
are \emph{mutually disjoint} sets.
Indeed, $E_{n}$ includes $Cyl(n+1)$
while $E_{n+1}$ includes $\overline{Cyl(n+1)}$.
Thus, clearly, for such $E_n,E_{n+1}$ and any $n$,
$\mu_{\mathscr{C}}(E_{n} \cup E_{n+1})= \mu_{\mathscr{C}}(E_n) + E_{n+1}$.
On the other hand, each disjunct
is a conjunction of formulas corresponding to \emph{independent} events,
$E_1=Cyl(n+1) \cap Cyl(n)$ and
$E_n = Cyl(n+1) \cap Cyl(n) \cap Cyl(n-1)  \cap
\overline{Cyl(n-2)} \cap \dots \overline{Cyl(1)}$.
So, by basic measure theory,
$\mu_{\mathscr{C}}(E_n)= 
\underbrace{1/4 \cdot \dots \cdot 1/4}_{n \ times} = \frac{1}{4^n}$,
that is each disjunct of the infinite disjunction,
$\ftwo_{ndy}$, has measure $\frac{1}{4^{i+1}}$.
So, we conclude $\mu_{\mathscr{C}}\model{\ftwo_{ndy}} = \sum_{i\in\Nat} \frac{1}{4^{i+1}}$
and, \textcolor{red}{by ...},
$\model{\ftwo_{ndy}}= \frac{1}{3}$.

Then, the validity for the formula below follows
in a natural way:
$$
\BOX^{1/3} \ftwo_{ndy} \wedge
\BDIA^{1/3} \ftwo_{ndy}.
$$
\end{ex}
%\newpage

\subsection{Proofs from Section~\ref{sec:QPL}}\label{app:QPL}

First, notice that nested $\BBOX^0$-quantifications 
in formulas of 
$\CPLc$ in prenex normal form can be reduced 
(in polynomial time) to
a single quantification.
(It is the same for $\BOX^1$-quantifications.

\begin{prop}
For any $(\BBOX^0 \dots \BBOX^0)\fone$,
where $\fone \in \Sigma^Q_0$,
$$
\model{(\BBOX^0\dots \BBOX^0)\fone}
\equiv \model{\BBOX^0 \fone}.
$$
For any $(\BOX^1 \dots \BOX^1)\fone$,
where $\fone \in \Sigma^Q_0$,
$$
\model{\BOX^1 \dots \BOX^1 \fone}
\equiv \model{\BOX^1 \fone}.
$$
\end{prop}

%%%% PROOF
\begin{proof}[Proof of Proposition~\ref{prop:NQF}]
Indeed,

\small
\begin{align*}
\model{(\exists X_1,\dots, X_n) \fone_1 \vee
(\exists X_1,\dots, X_n)\fone_2}_\rho
&= max\{\model{(\exists X_1,\dots, X_n)\fone_1},
\model{(\exists X_1,\dots, X_n)\fone_2}\} \\
&= max\{max\{\model{\fone_1}_{0/X_1,0/X_2 ... 0/X_n},
\model{\fone_1}_{0/X_1,0/X_2... 1/X_n} , ..., \\
& \ \ \ \ \ \ \ \ \ \ \ \ \ \ \ \ \ \ \ \model{\fone_1}_{1/X_1,1/X_2...1/X_n}\}, \\
& \ \ \ \ \ \ \ \ \ \ \ max\{\model{\fone_2}_{0/X_1,0/X_2...0/X_n}, ... 
\model{\fone_2}_{0/X_1, 0/X_2 ... 1/X_n},
..., \\
& \ \ \ \ \ \ \ \ \ \ \ \ \ \ \ \ \ \ \ \model{\fone_2}_{1/X_1,1/X_2 ... 1/X_n}\} \\
&= max\{\model{\fone_1}_{0/X_1,0/X_2...0/X_n},
\model{\fone_1}_{0/X_1... 1/X_n}
... \model{\fone_{2}}_{1/X_1... 1/X_n}\} \\
&= max \{max\{\model{\fone_1}_{0/X_1,0/X_2,...,0/X_n},
\model{\fone_2}_{0/X_1, 0/X_2...0/X_n}\}, ..., \\
& \ \ \ \ \ max\{\model{\fone_1}_{1/X_1, 1/X_2,...1/X_n}, \model{\fone_2}_{1/X_1,1/X_2,..., 1/X_n}\}\} \\
&= max\{ \model{\fone_1\vee \fone_2}_{0/X_1,0/X_2 ...0/X_n}
...
\model{\fone_1 \vee \fone_2}_{1/X_1,1/X_2 ... 1/X_n}\} \\
&= \model{(\exists X_1,\dots, X_n) \fone_1\vee \fone_2}.
\end{align*}
\normalsize
The other case is treated in a similar way.
\end{proof}

\begin{prop}
Given a formula of $\QPL$ in NQF,
$\fone=(\forall X_1,\dots, X_n)\fone_1\wedge \fone_2$,
$$
\model{(\forall X_1,\dots, X_n)\fone_1\wedge \fone_2}_\rho 
= \model{(\forall X_1,\dots, X_n)\fone_1
\wedge (\forall X_1,\dots,X_n) \fone_2}_\rho.
$$
\end{prop}
\begin{proof}
Analogous to the one above.
\end{proof}

\begin{proof}[Proof of Proposition~\ref{prop:NQF2}]
Indeed,
\begin{align*}
\model{\BBOX^0 \fone \vee \BBOX^0 \ftwo}
&=
\model{\BBOX^0 \fone} \cup
\model{\BBOX^0 \ftwo} \\
&= \begin{cases}
\twoOm \ &\text{if } \mu_{\mathscr{C}}(\model{\fone}) > 0 \\
\model{\BBOX^0 \ftwo} \ &\text{if } \mu_{\mathscr{C}}(\model{\fone}) = 0
\end{cases}  \\
&= \begin{cases}
\twoOm \ &\text{if } \mu_{\mathscr{C}}(\model{\fone})
\ge 0 \text{ or } \mu_{\mathscr{C}}(\model{\ftwo}) \ge 0 \\
\emptyset \ &\text{if } \mu_{\mathscr{C}}(\model{\fone})=\mu_{\mathscr{C}}(\model{\ftwo})=0 
\end{cases} \\
&= \begin{cases}
\twoOm \ &\text{if } \model{\fone} \neq \emptyset
\text{ or } \model{\ftwo} \neq \emptyset  \\
\emptyset \ &\text{if } \model{\fone} = \model{\ftwo} =\emptyset
\end{cases} \\
&= \begin{cases}
\twoOm \ &\text{if } \model{\fone} \cup \model{\ftwo}
\neq \emptyset \\
\emptyset \ &\text{if } \model{\fone} \cup \model{\ftwo}
= \emptyset
\end{cases} \\
&= \begin{cases}
\twoOm \ &\text{if } \mu_{\mathscr{C}}(\model{\fone \vee
\ftwo}) > 0 \\
\emptyset \ &\text{if } \mu_{\mathscr{C}}(\model{\fone
\vee \ftwo}) = 0
\end{cases} \\
&= \model{\BBOX^0 (\fone \vee \ftwo)}
\end{align*}
and
\begin{align*}
\model{\BOX^1 \fone \wedge \BOX^1\ftwo} 
&= \model{\BOX^1\fone} \cap \model{\BOX^1\ftwo} \\
&= \begin{cases}
\model{\BOX^1\ftwo} \ &\text{if } \mu_{\mathscr{C}}
(\model{\fone}) \ge 1 \\
\emptyset \ &\text{if } \mu_{\mathscr{C}}(\model{\fone})
< 1
\end{cases} \\
&= \begin{cases}
\twoOm \ &\text{if } \mu_{\mathscr{C}}(\model{\fone})
=\mu_{\mathscr{C}}(\model{\ftwo}) = 1 \\
\emptyset \ &\text{if } \mu_{\mathscr{C}}(\model{\fone})
< 1 \text{ or } \mu_{\mathscr{C}}(\model{\ftwo}) < 1
\end{cases} \\
&= \begin{cases}
\twoOm \ &\text{if } \model{\fone} = \model{\ftwo} = \twoOm
\\
\emptyset \ &\text{if } \model{\fone} \neq \twoOm 
\text{ or } \model{\fone} \neq \twoOm
\end{cases} \\
&= \begin{cases}
\twoOm \ &\text{if }
\model{\fone} \cap \model{\ftwo} = \twoOm \\
\emptyset \ &\text{if } \model{\fone}
\cap \model{\ftwo} \neq \twoOm
\end{cases} \\
&= \begin{cases}
\twoOm \ &\text{if } \model{\fone \wedge \ftwo}
= \twoOm \\
\emptyset \ &\text{if } \model{\fone \wedge \ftwo}
\neq \twoOm 
\end{cases} \\
&= \begin{cases}
\twoOm \ &\text{if } \mu_{\mathscr{C}}(\model{\fone
\wedge \ftwo}) = 1 \\
\emptyset \ &\text{if } \mu_{\mathscr{C}}(\model{\fone
\wedge \ftwo}) <1 
\end{cases} \\
&= \model{\BOX^1(\fone \wedge \ftwo)}.
\end{align*}
\end{proof}

\begin{defn}[$\Sigma^Q_1$- and $\Pi^Q_1$-Formulas]
The set $\Sigma^Q_0$ is the that of all formulas
of $\CPLc$ without quantifiers.
The set $\Sigma^Q_1$ is made of all 
formulas of $\CPLc$ 
in the form $\BBOX^0 \fone$, where
$\fone \in \Sigma^Q_0$. 
The set $\Pi^Q_1$ is defined by all
formulas of $\CPLc$ in the form
$\BOX^1\fone$, where $\fone \in \Sigma^Q_0$.
\end{defn}
\noindent
We then call the set $\CSat$
the set of all valid $\Sigma^Q_1$-formulas
and $\CTaut$ that of all valid $\Pi^Q_1$-ones.
Then, we can directly apply $(\cdot)_\start$ funciton
to given formulas in $\Sat$ (resp. $\CTaut$)
in prenex form, so
the translation is \emph{poly-time}.

%%% PROP
\begin{cor}
For any $\fone\in \Sat$, $(\fone)_\start \in \CSat$
and for any $\fone \in \Taut$, $(\fone)_\start \in \CTaut$,
where $(\fone)_\start$ is a poly-time function.
\end{cor}
\begin{proof}
By Definition~\ref{def:start} and Lemma~\ref{lemma:NQF}.
\end{proof}

}

\end{document}